\documentclass[letterpaper,english,numberwithinsect]{amsart}
\usepackage{microtype}
\usepackage{amsmath,amsfonts,amssymb,xcolor}
\usepackage{epsfig}

\usepackage[T1]{fontenc}
\usepackage{verbatim}
\usepackage{gensymb}
\usepackage{mathrsfs}
\usepackage{amsbsy}
\usepackage{comment}
\usepackage{textcomp}
\usepackage{framed}
\usepackage{epsfig}
\usepackage{xspace}
\usepackage{subcaption}
\usepackage[foot]{amsaddr}

\usepackage[section]{placeins} 

\newtheorem{theorem}{Theorem}

\newtheorem{lemma}{Lemma}[section]
\newtheorem{corollary}{Corollary}[section]
\newtheorem{definition}{Definition}[section]

\newcommand{\Z}{\ensuremath{\mathbb Z}}
\newcommand{\cC}{\mathcal{C}}

\newcommand{\cM}{\mathcal{M}}

\newcommand{\cS}{\mathcal{S}}

\newcommand{\union}{\cup}

\newif\ifcomment
\commenttrue  

\title[Slow Convergence of Spin Glass Models]{Slow Convergence of Ising and Spin Glass Models with Well-Separated Frustrated Vertices}
\author{David Gillman}
\address{Department of Computer Science, New College of Florida, [Sarasota, FL, 34243], USA}
\email{dgillman@ncf.edu}
\author{Dana Randall}
\address{School of Computer Science, Georgia Institute of Technology, [Atlanta, GA 30332], USA}\email{randall@cc.gatech.edu}
\subjclass{Theory of computation $\rightarrow$ Randomness, geometry and discrete structures $\rightarrow$ Random walks and Markov chains} 
\keywords{Mixing time, spin glass, Ising model, mixed boundary conditions, frustration}

\begin{document}
\date{}
\thispagestyle{empty}

\begin{abstract}
Many physical models undergo phase transitions as some parameter of the system is varied. This phenomenon has bearing on the convergence times for local Markov chains walking among the configurations of the physical system.  One of the most basic examples of this phenomenon is the ferromagnetic Ising model on an $n\times n$ square lattice region $\Lambda$ with mixed boundary conditions. For this spin system, if we fix the spins on the top and bottom sides of the square to be $+$ and the left and right sides to be~$-$, a standard Peierls argument based on energy shows that below some critical temperature~$t_c$, any local Markov chain $\mathcal{M}$ requires time exponential in $n$ to mix. 

Spin glasses are magnetic alloys that generalize the Ising model by specifying the strength of nearest neighbor interactions on the lattice, including whether they are ferromagnetic or antiferromagnetic.  Whenever a face of the lattice is bounded by an odd number of edges with ferromagnetic interactions, the face is considered {\it frustrated} because the local competing objectives cannot be simultaneously satisfied.  We consider spin glasses with 
exactly four well-separated frustrated faces that 
are symmetric around the center of the lattice region under $90$ degree rotations.
We show that local Markov chains require exponential time for all spin glasses in this class.  This argument extends to the ferromagnetic Ising model with mixed boundary conditions described above, which behaves like spin glasses with frustrated faces on the boundary. The standard Peierls argument breaks down when the frustrated faces are on the interior of $\Lambda$ 
and yields weaker results when they are on the boundary of $\Lambda$ but not near the corners.  
We show that there is a universal temperature $T$ below which $\mathcal{M}$ will be slow for all spin glasses with four well-separated frustrated faces. Our argument shows that there is an exponentially small cut indicated by the {\it free energy}, carefully exploiting both  entropy and  energy to establish a small bottleneck in the state space to establish slow mixing.  

\end{abstract} 

\maketitle

\newpage
\setcounter{page}{1}

\section{Introduction}
The celebrated Ising model on the Cartesian lattice is a
fundamental  model for ferromagnetism and one of the simplest models demonstrating an order-disorder phase transition. Each configuration
$\sigma$ in the state space $\Omega= \{-1,+1\}^{n^2}$ consists of an
assignment of a $+$ or~$-$ spin to each of the vertices, and the {\it
 Gibbs (or Boltzmann) distribution} assigns weight 
$$ \pi(\sigma) = e^{-\beta H(\sigma)}/Z(\beta),$$
where 
$$H(\sigma) = - \sum_{(i,j)\in E} \sigma_i \sigma_j $$ is the {\it Hamiltonian} (or energy) of the system, $\beta = 1/T$ is inverse temperature, and 
 $Z(\beta) = \sum_{\sigma \in \Omega} e^{-\beta H(\sigma)}$ is the normalizing constant known as the {\it partition function}.  
In Sections~\ref{sec:cross} and~\ref{sec:slow} it will be convenient to write the probability of a configuration in terms of $\lambda = e^{2\beta} = e^{2/T}$, where $\lambda$ can be seen as the weight assigned to edges whose endpoints are assigned like spins.

Physicists characterize when there is a phase transition in a physical model 
by asking whether there is a unique limiting conditional distribution on finite subregions as the lattice size grows. The Gibbs distribution is defined as any limiting measure, but this limit might not be unique.  For example, for the Ising model on $\Z^2$ at sufficiently low temperatures, the probability of an interior vertex being assigned $+$ will be much higher if the boundary vertices were hard-wired to be $+$ than if they were hard-wired to be $-$, and this difference  persists in the limit.
The infinite volume Ising model was solved exactly by Onsager in 1944 \cite{onsager}, showing that there
is a critical value $\beta_c = \ln(1+\sqrt{2})/2$ such that for $\beta < \beta_c$ (i.e., high temperature),
the limiting distribution is unique, and for $\beta > \beta_c$ (i.e., low temperature),
spins on the boundary
of the region persist and there are multiple
limiting distributions. 
The all-plus and the all-minus boundary conditions are known to be extremal measures \cite{aiz, hig}.

A related effect
has been observed in the context of mixing times of local Markov chains for the Ising model on finite lattice regions with free boundaries (i.e., boundary vertices can take on either spin).
The {\it mixing time} $\tau({\mathcal M})$ of a chain ${\mathcal{M}}$, i.e., the number of
steps required so that the  distribution over configurations
is close to its stationary distribution, also undergoes 
a  phase change. 
When $\beta$ is  small, local
dynamics are known to be efficient \cite{mo1,  mo2, lmst}, while when $\beta$ is large, local
chains require exponential time to converge to equilibrium~[\ref{thomas}]. 
At low enough temperature, the Gibbs
distribution strongly favors configurations that have predominantly one
spin, and it will take exponential time to move from a mostly $+$ state to a mostly
$-$ one using moves that only change $o(n^2)$ sites at a time~\cite{mart}.


Mixing times of Markov chains are known to be sensitive to boundary conditions.
For example, local  chains on Ising configurations are conjectured to converge in polynomial time at all temperatures for the ``all $+$'' boundary condition where all vertices on the boundary are hard-wired to have $+$ spins.  
While still open,  
Martinelli \cite{martinelli94} showed mixing is indeed sub-exponential at all temperatures with all + boundary conditions and subsequently Lubetsky et al. \cite{lmst} showed that the chain converges in quasi-polynomial time.  
However, a standard Peierls argument can be used to show that when there are mixed boundary conditions with 4 connected components of like spins on the boundary, alternating ``$+, -, +, -$'', then the chain again will be slow at low temperatures.  In particular, for {\it mixed boundary conditions} where we fix the boundary to be + on the vertical sides and $-$ on the horizontal sides, then the chain provably requires time exponential in $n$ at sufficiently low temperature.  
For ``p-shifted mixed boundary conditions'' where we rotate the mixed  boundary conditions clockwise by $p$ units,
We explain this in Section~\ref{sec:cross}.  More powerful machinery such as the approach of Dobrushin, Koteck\'y and Schlosman 
\cite{dks} for the Ising model establish bounds on the temperature below which convergence is slow, but they do not easily extend to other cases we consider.

Similar questions can be examined in the context of {\it spin glasses},  or  magnetic alloys that are a natural generalization of the ferromagnetic and antiferromagnetic Ising models.  We are given a graph $G=(V,E)$ and a set of couplings $J_{ij} 
\in \{-1, +1\}$ for each edge $(i,j) \in E$.  The state space is $\Omega = \{-1,+1\}^V,$ where a configuration assigns a spin to each vertex in $V$.  For a spin glass configuration $\sigma \in \Omega,$ the Hamiltonian is defined as
$$H(\sigma) = - \sum_{(i,j)\in E} J_{ij} \sigma(i)\sigma(j)$$ 
and the Gibbs distribution is defined as for the Ising model as $\pi(\sigma) = e^{-\beta H(\sigma)}/Z(\beta)$.

When all the $J_{ij}=+1,$ this model is precisely the ferromagnetic Ising model on $G$; when all the $J_{ij}=-1,$ it is  antiferromagnetic.  In general, the behavior of a spin glass is much richer than simple models of magnetism because of the presence of {\it frustration}, or  competition between local interactions. 
In the case of $G = \Lambda$, a square region in the lattice, a face of $\Lambda$ is {\it frustrated} when $J_{ij}=-1$ for an odd number of edges around the face. No setting of the sites on the corners of such a face will satisfy all four edges, i.e., make each $J_{ij} \sigma(i)\sigma(j) = 1$. Even finding the ground states (or most likely configurations) reduces to solving an optimization problem that can be NP-hard (see, e.g., \cite{baharona}).  It will be convenient to refer to the dual lattice $\overline{\Lambda}=(\overline{V}, \overline{E})$ and refer to a frustrated face $f$ of $\Lambda$ by the frustrated vertex $v = \overline{f}$ in $\overline{V}$.  

Here, we study spin glasses with exactly four well-separated frustrated faces in order to understand the long-range interactions and their effects on mixing times.  We fix the nearest-neighbor interactions around the boundary of $\Lambda$ to be ferromagnetic, and we assume fixed $+$ sites on the boundary. 
Similar models with well-separated defects have been explored to understand long-range correlation; for example, in seminal work, Ciucu \cite{ciucu} studied the monomer-dimer model with a constant number of monomers and established a connection with electrical networks, settling a nearly century old conjecture about long-range effects due to isolated monomers.  
Similar questions arise naturally in the context of spin glasses.

We show that there is a universal temperature $T$ below which the Markov chain $\mathcal{M}$ will be slow for any spin glass with exactly four frustrated vertices that are well-separated and symmetric around the origin under $90$ degree rotations.  We identify a bottleneck in the state space by looking at the how the  {\it free energy} (i.e., $\ln Z /n^2$) changes as a parameter of the system is varied.  The same argument easily extends to the Ising model with $p$-shifted mixed boundary conditions, which behaves like spin glasses with four symmetric frustrated faces near the boundary (and  indeed can be viewed as a special case of the spin glasses we consider if we also fix $+$ spins adjacent to the boundary).

\begin{theorem}\label{thm:main}
Let $\Lambda$ be a square lattice region
with fixed $+$ sites on the boundary of $\Lambda$ and a fixed ferromagnetic interaction $J_{ij} = 1$ on each boundary edge $(i, j)$.
Suppose $\Lambda$ has  exactly four frustrated faces,  $f_1, \dots, f_4$, that are symmetric around the center of the lattice region under 90 degree rotations and are well-separated (i.e., the shortest lattice path from $f_i$ to $f_{i+1}$ has length $2n$, $i = 1, 2, 3$).  
Then there is a universal temperature 
$T$
such that  the Glauber dynamics ${\mathcal M}$ for the spin glass model on $\Lambda$ with $f_1,...,f_4$ the faces with frustration has mixing time  $\tau({\mathcal{M}}) \geq e^{cn}$, for some constant $c>0,$ whenever $t < T$.
\end{theorem}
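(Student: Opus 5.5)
We will prove Theorem~\ref{thm:main} with a conductance (bottleneck) argument. Mixing time is bounded below by the inverse of the conductance of a set: it suffices to find $S\subset\Omega$ with $\pi(S)\le 1/2$ such that the ergodic flow $Q(S,\overline S)$ out of $S$ satisfies $Q(S,\overline S)/\pi(S)\le e^{-cn}$.

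\textbf{Contour description.} Every frustrated face is a degree-one defect in the dual picture. For any configuration $\sigma$, the set of unsatisfied edges (those with $J_{ij}\sigma(i)\sigma(j)=-1$) has dual edges forming a subgraph of $\overline{\Lambda}$ in which exactly $\overline f_1,\dots,\overline f_4$ have odd degree. Because the boundary spins are fixed to $+$ and the boundary couplings are ferromagnetic, this subgraph never uses boundary dual edges. Hence it always contains two paths pairing the four frustrated vertices, in one of three pairings. Opposite faces can only be joined by paths that cross, and by 90-degree symmetry the two non-crossing pairings are
\[
P_1=\{f_1f_2,\;f_3f_4\},\qquad P_2=\{f_2f_3,\;f_4f_1\}.
\]
We classify each $\sigma$ by which pairing its minimal odd-degree-correcting structure realizes. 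To make the classification canonical, we use the parity of unsatisfied dual edges crossing a fixed dual path from $\overline f_1$ to $\overline f_2$, corrected by the coupling signs. This parity is a function of $\sigma$ alone and separates $P_1$ from $P_2$ (with the crossing pairing absorbed into either class). Let $S$ be the configurations of type $P_1$. The rotation by $90$ degrees maps the spin glass to itself up to a gauge transformation, which preserves $\pi$, and swaps the two types. Therefore $\pi(S)=\pi(\overline S)$ up to the crossing class, which we put on one side.

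\textbf{Bottleneck.} A single Glauber flip changes the unsatisfied set only locally. So any transition from $S$ to $\overline S$ passes through a configuration $\sigma$ whose unsatisfied dual subgraph simultaneously contains paths realizing both pairings. Such a subgraph connects all four frustrated vertices. Since they are pairwise at distance at least $2n$, its total length is at least $3n$. By contrast, a typical configuration only needs about $2n$ edges. The main step is a Peierls-type map. For each such boundary configuration $\sigma$, we erase a connected piece of length $\ell\ge n$ (plus the extra $\Theta(n)$ excess) by flipping spins on the region it encloses. This produces a configuration $\sigma'$ of lower energy, with weight ratio $\lambda^{-\ell}$. The map is at most $C^{\ell}$-to-one, where $C\approx 3$ counts the self-avoiding dual paths from a known endpoint. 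Summing gives
\[
\pi(\partial S)\le \sum_{\ell\ge n}(C/\lambda)^{\ell}\,\pi(S)\le e^{-cn}\pi(S)
\]
once $\lambda=e^{2/T}>C^2$. This defines the universal $T$, independent of where the well-separated faces sit.

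\textbf{Main obstacle.} The difficulty is that for interior frustrated faces the naive Peierls erasure fails. Removing a contour between frustrated vertices is not possible, because some path must remain, so the energy saving must be measured relative to the cheapest alternative realization. This is exactly where entropy enters. We would first try to compare the free energy of ``four vertices joined by a connected structure'' to ``two disjoint paths''. To do this we would build an injection that reroutes the connected structure into two disjoint pairings. Separation $2n$ guarantees an energy gain proportional to $n$. We must also verify that the multiplicity of rerouted preimages grows only like $C^{\text{excess}}$ times a polynomial factor. That factor is absorbed by the exponential gain, which gives $\tau(\mathcal M)\ge e^{cn}$ for $t<T$.
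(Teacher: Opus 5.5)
Your conductance framework, the gauge-symmetry argument, and the observation that changing the pairing forces the chain near a configuration connecting all four frustrated vertices all match the paper. The gap is the energy comparison in your bottleneck step, which is false in exactly the hard case.

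In the paper's coordinates the frustrated vertices are $v_1=(2pn,2n)$, $v_2=(2n,2qn)$, $v_3=(2qn,0)$, $v_4=(0,2pn)$ with $0\le p\le 1/2$. A configuration of either non-crossing type already contains two paths joining faces at distance $2n$, hence at least $4n$ contour edges (not ``about $2n$''). A structure connecting all four vertices contains a cross of length at least $L=8n-8pn$ (Lemma~\ref{lem:min-cross}), and this equals $4n$ at $p=1/2$.

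So replacing a cross of length $L+\ell$ by two non-crossing paths saves at most $4n-8pn+\ell$ edges. Your multiplicity $C^{\ell}$ also ignores the roughly $4n$ edges that must be added back. With the honest count $3^{L+\ell}$ of crosses, your argument is exactly the Peierls argument of Section~\ref{sec:cross}. That argument needs $\lambda^{1-2p}>3^{2-2p}$, a threshold that blows up as $p\to 1/2$ and gives nothing at $p=1/2$.

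Worse, at $p=1/2$ the plus-shaped cross is itself a minimum-energy configuration. So no map sending each cut configuration to a single lower-energy configuration can work, however carefully crosses are counted. There, $\pi(\cC)$ is small only because near-minimal crosses are few while minimal non-crossing configurations (pairs of monotone paths) are exponentially many. Your claim that separation $2n$ gives an energy gain proportional to $n$, and hence a universal $T$ via $\lambda>C^2$, is false.

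The missing ingredient is a bound on free energy. Your ``main obstacle'' paragraph names entropy but supplies no mechanism. The paper uses two entropy facts:
\begin{enumerate}
\item With $s=1/2-p$, there are at most $\binom{2n+4sn}{4sn}^2\binom{L+\ell}{\ell}3^{\ell}$ crosses of length $L+\ell$. This is $2^{o(n)}$ when $s$ and $\ell/n$ are small.
\item For $\ell\le 4np^2$, each $\sigma\in\cC_C$ is mapped to a whole family of configurations, one for each pair of monotone staircases $v_1\to v_4$, $v_3\to v_2$ avoiding the central square $S_\ell$. There are at least $2^{4nh(p)^4}$ such pairs up to polynomial factors. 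Each image has energy at most $H(\sigma)-(4n-8pn+\ell/2)$. The families of distinct $\sigma$ become disjoint once token strings with no two consecutive $1$s are attached, at a cost of at most $\phi^{4n}$ choices.
\end{enumerate}
Together these give $\pi(\cC_C)\le\lambda^{-(4n-8pn+\ell/2)}\,2^{-4n(h(p)^4-\log_2\phi)}$. This is exponentially small at $p=1/2$ even with no energy gain, because $h(1/2)=1>\log_2\phi$. Large $\lambda$ then covers the rest of the $(p,\ell)$ range uniformly.

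The technical core, with no counterpart in your sketch, has three parts. The staircases must pass through other contours of $\sigma$ without raising the energy (shifting components, diverting at corners, placing tokens). At most $\ell/2$ edges may be added twice. And the map must be decodable.

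Separately, your parity classifier cannot work. A parity of unsatisfied edges is linear in $\Gamma(\sigma)$, so its change under toggling a face (a single flip) does not depend on $\sigma$. Every face can be toggled without changing the pairing in some configuration. Hence such a parity is invariant under all flips and constant on $\Omega$. Concretely, along a fixed path from a site $x$ next to $f_1$ to a site $y$ next to $f_2$, it is determined by $\sigma_x\sigma_y\prod_e J_e$. Also, for interior faces the pairing $\{f_1f_3,f_2f_4\}$ can be realized by disjoint paths, contrary to your claim. The classes must be defined by connectivity, as the paper does with the distances $\alpha,\beta$ and the orientation $\gamma=\beta-\alpha$.
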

\noindent 
The theorem remains true under the additional assumption of fixed $+$ sites adjacent to the boundary.  As a corollary this  gives a universal bound on the temperature for the Ising model with $p$-shifted mixed boundary conditions.

The proof of Theorem~\ref{thm:main} requires several innovations.
The standard argument to show slow mixing is based on the {\it conductance} of the Markov chain.  The key is showing that the state space $\Omega$ can be partitioned into two sets, $S$ and its complement $S^C$, such that getting from $S$ to some subset $S^C$ requires passing through a small cutset  $\cC \subset S^C$, and the stationary weights $\pi(S)$ and $\pi(S^C)$ are both exponentially larger than $\pi(\cC)$.  This establishes that the chain has low conductance, which implies it takes exponential time to converge to equilibrium \cite{js}.   
The main ingredient is typically a {\it Peierls argument} \cite{peierls}, which introduces a map $\Psi$ from $\cC$ to $S \cup S^C$.  Typically $\Psi$ is chosen so that for all $\sigma \in \cC$, we have $\pi(\Psi(\sigma)) \geq \pi(\sigma) e^{cn}$, mapping elements of $\cC$ to configurations with exponentially larger weight.  If we can show that $\Psi$ is nearly injective  (i.e., the cardinality of the inverse image of each configuration is bounded by a polynomial), then we can conclude that $\pi(\cC)$ is exponentially small.

In our setting, there is not always  a natural candidate map that increases the probability of a configuration exponentially. In fact, the standard map gives no guaranteed increase to the stationary probability when each side of the boundary has close to an equal number of $+$ and $-$ spins (when $p=0.5$ and the boundary changes spin at the center of the four sides of the boundary).  In this case, we exploit the low {\it entropy} of $\cC$ by defining an injective map from $\cC \times 2^{cn} \rightarrow \Omega$, for some $c>0$.  The map never decreases the weight of a configuration, so we again can conclude that $\pi(\cC)$ is exponentially small.  As we vary $p$, the {\it free energy} of $\cC$ remains small  compared to the two sides of the cut due to a derease in  energy (when $p$ is close to $0$) or due to entropy  (when $p$ is close to $0.5$); all other cases rely on both.

An important technical contribution in our proofs is in the construction of a new injective map.  The {\it contour representation} of a spin glass configuration consists of edges in the dual lattice that cross  edges $e=(i,j)$ where $J_{ij} \sigma(i) \sigma(j) = -1$; in this representation the frustrated vertices in the dual lattice have odd degree and all other vertices have even degree. Because of this property the contour representation can be decomposed into a even cycles (closed contours) and two long paths whose endpoints are the four frustrated vertices.  In the standard case of the Ising model  with mixed side boundary conditions, we can define an injective map that shifts the paths connecting the four frustrated vertices to paths with much shorter length, and therefore much larger probability.  The new paths can be added along the boundary by shifting closed contours.  In our case we cannot do this since we cannot always construct maps to configurations with larger probability. Therefore we define a map to a {\it set} of configurations of {\it at least equal} probability.  To complete the proof we require a careful map that allows us to reconstruct the original path, the new path, and the closed contours that are intersected when the new path is added.  Verifying that the map is injective now requires a very sensitive combinatorial encoding and decoding that is likely of independent interest.

\section{Preliminaries}\label{PrelimSection}

We review some standard background on Markov chains, convergence times, and the Ising model that are required for our results.

\subsection{Markov chains and mixing times}

Let ${\cM}$ be an ergodic, 
reversible Markov chain with arbitrary finite state space $\cS$, transition
probability matrix $P$, and stationary distribution $\pi$. Let
$P^t(x,y)$ be the $t$-step transition probability from $x$ to $y$, and
let $||\cdot, \cdot||$ denote total variation distance.

\begin{definition}
For $\varepsilon > 0$, the {\em mixing time} is defined as
$$\tau(\epsilon)=\min\{t : \max_{x \in \cS}  \sum_{y\in \cS} ||P^{t'}(x,y),\pi(y)||\leq \epsilon, \ \rm{for \ all} \  t'\geq t\}.$$
\end{definition}

\noindent A Markov chain is \emph{rapidly} (or {\it polynomially) mixing}  if the mixing
time is bounded above by a polynomial in $\log{\cS}$, the length of a description of a state in $\cS$.
A chain is \emph{slowly mixing} if the mixing time is bounded below by an exponential
function. 
The {\it conductance}, introduced by Jerrum and Sinclair \cite{js}, is useful to bound the
 mixing time \cite{js}. 

\begin{definition}
For a Markov chain with stationary distribution $\pi$,  the
\emph{conductance} $\Phi$ is 
$$\Phi =\min_{S: 0 < \pi(S)\leq 1/2} \frac{\sum_{x\in S, y\not\in S}
 \pi(x) P(x,y)}{\pi(S)}.$$ 
\end{definition}


\begin{theorem}
  \label{CondThm}
  {\rm (Jerrum and Sinclair \cite{js})}
  The mixing time of a Markov chain with conductance
  $\Phi$ satisfies:
  \[\tau(\epsilon) \ \geq \
  \left( \frac{1-2\Phi}{2\Phi} \right)\ln \epsilon^{-1}.\]
\end{theorem}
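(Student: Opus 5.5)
The plan is the standard spectral route, in two steps. First relate the conductance $\Phi$ to the spectral gap $1-\lambda_1$ of $P$ (the easy direction of Cheeger's inequality). Then lower-bound the mixing time by the relaxation time $1/(1-\lambda_1)$. Since $\cM$ is reversible, $P$ is self-adjoint on $L^2(\pi)$ with inner product $\langle f,g\rangle_\pi=\sum_x \pi(x)f(x)g(x)$. Its eigenvalues are therefore real, $1=\lambda_0>\lambda_1\ge\cdots\ge\lambda_{|\cS|-1}\ge -1$. I will work with $\lambda_1$ (or with $\lambda_\ast=\max_{i\ge1}|\lambda_i|\ge\lambda_1$ if one prefers, since the final bound only improves).

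\emph{Step 1: $1-\lambda_1\le 2\Phi$.} Use the variational characterization
\[
1-\lambda_1=\min_{f\not\equiv \mathrm{const}}\frac{\mathcal{E}(f,f)}{\mathrm{Var}_\pi(f)},\qquad \mathcal{E}(f,f)=\tfrac12\sum_{x,y}\pi(x)P(x,y)\bigl(f(x)-f(y)\bigr)^2 .
\]
Take $S$ attaining the minimum in the definition of $\Phi$, with $0<\pi(S)\le 1/2$, and test with $f=\mathbf{1}_S$. By reversibility,
\[
\mathcal{E}(f,f)=\sum_{x\in S,\,y\notin S}\pi(x)P(x,y)=\Phi\,\pi(S).
\]
On the other side, $\mathrm{Var}_\pi(f)=\pi(S)(1-\pi(S))\ge \pi(S)/2$. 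Dividing gives $1-\lambda_1\le 2\Phi$. As a consequence,
\[
\frac{\lambda_1}{1-\lambda_1}=\frac{1}{1-\lambda_1}-1\ \ge\ \frac{1}{2\Phi}-1=\frac{1-2\Phi}{2\Phi}.
\]

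\emph{Step 2: $\tau(\epsilon)\ge \frac{\lambda_1}{1-\lambda_1}\ln\epsilon^{-1}$.} Let $f$ be a right eigenfunction for $\lambda_1$. Since $f\perp_\pi \mathbf 1$, we have $\sum_y\pi(y)f(y)=0$. Hence for every $x$,
\[
|\lambda_1|^t|f(x)|=\Bigl|\sum_y \bigl(P^t(x,y)-\pi(y)\bigr)f(y)\Bigr|\le \|f\|_\infty\sum_y|P^t(x,y)-\pi(y)|.
\]
Choose $x$ with $|f(x)|=\|f\|_\infty$ and take $t=\tau(\epsilon)$. This yields $\lambda_1^{\tau(\epsilon)}\le \epsilon$, up to the constant factor relating the sum $\sum_y|P^t(x,y)-\pi(y)|$ to the paper's total-variation convention. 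I will absorb that factor by reading the summed distance in the definition of $\tau(\epsilon)$ exactly as written, so that no factor of $2$ appears.

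It follows that $\tau(\epsilon)\ln(1/\lambda_1)\ge \ln\epsilon^{-1}$. Now use $\ln(1/\lambda_1)\le (1-\lambda_1)/\lambda_1$, which is just $\ln u\le u-1$ with $u=1/\lambda_1$. This gives $\tau(\epsilon)\ge \frac{\lambda_1}{1-\lambda_1}\ln\epsilon^{-1}$. Combining with Step 1 yields the claimed bound $\tau(\epsilon)\ge\frac{1-2\Phi}{2\Phi}\ln\epsilon^{-1}$. If $\lambda_1\le 0$, then $\Phi\ge 1/2$ by Step 1 and the bound is trivial.

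The only delicate point I anticipate is bookkeeping in Step 2. One must make sure the normalization of the distance in the mixing-time definition matches the inequality $|\lambda_1|^t\le\epsilon$, with no stray factor of $2$ that would turn $\ln\epsilon^{-1}$ into $\ln(2\epsilon)^{-1}$. One must also handle $\lambda_1$ possibly nonpositive, which the triviality remark above covers. Step 1 is routine once reversibility is used to symmetrize the Dirichlet form.
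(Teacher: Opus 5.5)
Your argument is correct. The paper gives no proof of this theorem and simply quotes it from Jerrum--Sinclair, and your route (the easy half of Cheeger's inequality, $1-\lambda_1\le 2\Phi$, followed by the eigenfunction bound $\lambda_1^{\tau(\epsilon)}\le\epsilon$ and $\ln(1/\lambda_1)\le(1-\lambda_1)/\lambda_1$) is the standard derivation behind that citation.

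Your decision to read the distance as the unnormalized sum $\sum_y|P^t(x,y)-\pi(y)|$ is forced, not cosmetic. With the usual half-$L^1$ total variation you only get $\ln(2\epsilon)^{-1}$, and the stated bound can then fail. For example, take the two-state chain with flip probability $q=0.1$ and $\epsilon=1/4$. There $\Phi=0.1$ and $\tau(\epsilon)=4$, but $\frac{1-2\Phi}{2\Phi}\ln\epsilon^{-1}\approx 5.5$.
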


\noindent To establish slow mixing, our strategy will be to define a set $S$ along with sets $T \subset S^C$ and $\cC \subset S^C \setminus T$ in the state space, such that $\pi(S) = \pi(T)$ and $\pi(\cC)/\pi(S) < e^{-cn}$ and such that getting from $S$ to $S^C$ in the Markov chain requires going through $\cC$. 

In this paper, we will focus on the simplest local Markov chain ${\mathcal M}$ for the Ising and spin glass models, known as {\it Glauber dynamics}, which connects pairs of configurations whose spins differ on at most one vertex. In a given step, the chain picks any vertex $v \in \Lambda$ at random and changes the spin with the appropriate transition probabilities so that the chain converges to the Gibbs distribution $\pi$. For our  models, the transition probabilities of ${\mathcal M}$ are defined as
$$P(\sigma, \tau) \ = \ \frac{1}{2 n^2} \ \min \left( 1, \frac{\pi(\tau)}{\pi(\sigma)}\right),$$
if $|\{i: \sigma_i \neq \tau_i\}| = 1$, and with all remaining probability stay at the current configuration.

\vspace{-.1in}
\subsection{The Contour representation of the Ising and spin glass models}\label{sec:contour}
It will be convenient to view Ising and spin glass configurations in terms of  {\it contours}.  
Recall the setting of Theorem~\ref{thm:main} in which $\Lambda = (V, E)$ is a square lattice region and $\Omega = \{-1,+1\}^V$.
For every configuration $\sigma \in \Omega$, there is a contour representation $\Gamma(\sigma)$ in $\overline{\Lambda}$, the planar dual to $\Lambda$.  
We define $\overline{\Lambda}=(\overline{V}, \overline{E})$ by letting $\overline{V}$ correspond to the centers of unit squares in $\Lambda$ and  edges $\overline{E}$ connect any two vertices whose corresponding squares share an edge in $\Lambda$.  
An edge $e' \in \overline{E}$ that is dual to $e = (i, j) \in E$ is in $\Gamma(\sigma)$ if $J_{ij}\sigma(i)\sigma(j) = -1$ and we omit it if $J_{ij}\sigma(i)\sigma(j) = +1$.  For the Ising model where all the $J_{ij}=+1$, the contour representation $\Gamma(\sigma)$ is precisely the set of edges separating $+$ and $-$ components in $\sigma$.
Note that we can reconstruct the spin configuration $\sigma$ from the contour representation (given a single spin) if we know the values of $\{J_{ij}\}$.  The weight of a configuration $\sigma$ is determined by $\Gamma(\sigma)$, and there is a weight-preserving bijection between the configurations of any two spin glasses with the same set of frustated vertices.

\begin{figure}[t]
\begin{center}
\includegraphics[scale=.4]{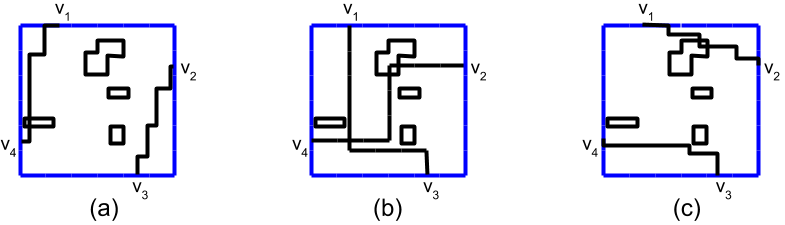}
\end{center}
\vspace{-.2in}
\caption{States with (a) positive orientation, (b) orientation 0, (c) negative orientation.}
\label{fig:orientations}
\end{figure}

For the spin glass model considered here, all vertices of $\overline{V} \setminus \{v_1,...,v_4\}$  have even degree in $\Gamma(\sigma)$ and  the frustrated vertices $\{v_1,...,v_4\}$  have odd degree.  
 It follows that $\Gamma(\sigma)$ must be the union of two paths terminating at the frustated vertices, along with even cycles.   (Note that these paths and cycles can intersect each other, and therefore are not necessarily unique.)  In all that follows, it will be convenient to shift the primal lattice $\Lambda$ by $(-1/2, -1/2)$ so that the vertices of $\overline{\Lambda}$ are integral.
Now, recall that we assume that the four frustrated vertices lie on the boundary of a $2n \times 2n$ square $S$ within $\overline{\Lambda}$ centered at $(n, n)$, and they are symmetric under rotations by $90$ degrees.  Without loss of generality, we label these so that $v_1$ lies on the top side of $S$ and is the $i^{th}$ vertex from the upper left corner for some $0 \leq i \leq n$. Setting $p = i/2n$, $v_1$ is at a distance of $2pn$ from the upper left corner, $v_2$ is on the right side of $S$ a distance of $2pn$ from the upper right corner, $v_3$ is on the bottom of $S$ a distance of $2pn$ from the lower right corner, and $v_4$ is on the left side of $S$ a distance of $2pn$ from the lower left corner.  The key to all of our arguments is how the two long paths in $\Gamma(\sigma)$ pair up these frustrated vertices.
Let $\alpha(\sigma)$ be the length of the shortest path in $\overline{\Lambda}$ from the connected component of $\Gamma(\sigma)$ containing $v_1$ to the connected component containing $v_4$ (if $v_1$ and $v_4$ are connected, $\alpha(\sigma)=0$).  Likewise, let $\beta(\sigma)$ be the length of the shortest path between the component containing $v_1$ and the component containing $v_2$.  Let $\gamma(\sigma) = \beta(\sigma)-\alpha(\sigma)$ be the {\it orientation} of the configuration $\sigma$.  We  partition the state space $\Omega$ into a disjoint union  $\Omega = \cup_{i \in \mathbb{Z}} \  \Omega_i$, where $\sigma \in \Omega_i$ if $\gamma(\sigma) = i$.  

The partition of $\Omega$ into $\cup_i \Omega_i$ allows us to define a cut in the state space in order to bound the conductance.
In particular, we let $\Omega^- = \cup_{i<0} \ \Omega_i$ and $\Omega^+ = \cup_{i>0} \ \Omega_i$, and we observe that $\Omega = \Omega^- \cup \Omega_0 \cup \Omega_+$.  We specify a subset of $\cC \subset \Omega_0$ that will be critical to defining the cut as $\cC =  \{\sigma \in \Omega_0: \alpha(\sigma) = \beta(\sigma) = 0\}$ (i.e., the configurations in which $v_1$ is connected to both $v_2$ and $v_4$). See Figure~\ref{fig:orientations}.  Finally, we define $\cC^* = \cC \cup \Omega_{-1} \cup \Omega_1$ to be the configurations where the paths connecting the frustrated vertices are within distance 1 of each other.   Following \cite{randall-top}, for configurations in $\cC$, we partition the cross into two paths, one from $v_1$ to $v_3$ and a one from $v_2$ to $v_4$; we do the same for configurations in $\Omega_{-1}$ and $\Omega_1$, although it may be necessary to add a single ``defect'' that encodes where one or both of these paths incurs a jump by one unit.
To move from a configuration in $\Omega^-$ to one in $\Omega^+$  using Glauber dynamics, we must pass through a configuration in $\cC^*$.
We will show that the probability of $\cC^*$ is exponentially small, and this will allow us to argue that the Glauber dynamics requires exponential time to converge to equilibrium.

\section{Slow Mixing for the Ising model with Mixed Boundaries}\label{sec:cross}

We start with the standard approach used to show slow mixing when the boundary conditions alternate spins on the boundary of a $(2n+1) \times (2n+1)$ lattice region $\Lambda$.  
Here $\overline{\Lambda}$ is the $2n \times 2n$ lattice region centered in $\Lambda$.
This will motivate the approach used in the general spin glass setting (when the frustrated vertices are not necessarily on the boundary of $\Lambda$) and will elucidate the difficulties in generalizing this simpler result.  

Fix $0\leq p \leq 1/2$ and let $q=1-p$.  We define $v_1=(2pn, 2n), v_2=(2n,2qn), v_3=(2qn,0)$ and $v_4=(0,2pn)$.  We consider {\it p-shifted mixed boundary conditions} in which all vertices on the boundary between $v_1$ and $v_2$ and between $v_3$ and $v_4$ are assigned $+$ and the others are assigned $-$.  
The vertices $v_1,...,v_4$ define the endpoints of a pair of paths in each configuration.  
(There may be more than one choice of paths.) 
Using the strategy outlined in Section~\ref{sec:contour}, we recall that $\cC$ consists of those configurations where there are paths from $v_1$ to both $v_2$ and $v_4$ (and therefore also to $v_3$).  Using the notion of 	``fault lines'' introduced in \cite{randall-top}, we note that this is the set of configurations that contain a {\it horizontal fault line}, i.e.,. a path from $v_2$ to $v_4$, and a {\it vertical fault line}, i.e., a path from $v_1$ to $v_3$.  When both fault lines are present (and intersect) we call their union a {\it cross}.  We define the cross so that it is a maximal component of the contour representation of the configuration.

Let $C$ be a cross in $\overline{\Lambda}$.  We have the following lemma. See Figure~\ref{fig:min-cross-a}.

\begin{lemma}
\label{lem:min-cross}
Let $S_n$ be the $2n\times2n$ axis-aligned square whose sides contain $v_1,..,v_4$. 
Then $|C| \geq 8n-8pn$. 
\end{lemma}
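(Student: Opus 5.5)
The plan is to reduce the bound to two Manhattan-distance estimates. The point is that a cross contains two \emph{edge-disjoint} fault lines: a vertical one from $v_1$ to $v_3$ and a horizontal one from $v_2$ to $v_4$. Each has length at least the $\ell_1$ distance between its endpoints, and these two distances add up to exactly $8n-8pn$.

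First I would justify edge-disjointness. As noted in Section~\ref{sec:contour}, $\Gamma(\sigma)$ is a set of dual edges in which $v_1,\dots,v_4$ have odd degree and every other vertex has even degree. Restrict to the component $C$: it contains all four frustrated vertices and has the same parity property. Take any path $P$ in $C$ from $v_1$ to $v_3$ and delete its edges. In $C\setminus P$, the vertices $v_2$ and $v_4$ are now the only ones of odd degree. Since every graph has an even number of odd-degree vertices in each component, $v_2$ and $v_4$ must lie in the same component of $C\setminus P$, so there is a path $Q$ from $v_2$ to $v_4$ using no edge of $P$. This is the partition of the cross into two paths recalled from \cite{randall-top}, and it gives $|C|\ge |P|+|Q|$.

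The path bounds are then immediate, since every lattice path has length at least the $\ell_1$ distance between its endpoints. With $v_1=(2pn,2n)$ and $v_3=(2qn,0)$,
\[
|P| \;\ge\; |2qn-2pn| + 2n \;=\; 2n(1+q-p) \;=\; 4n-4pn,
\]
using $q=1-p\ge p$. By the $90$ degree symmetry, $|Q|\ge 4n-4pn$ as well, with $v_2=(2n,2qn)$ and $v_4=(0,2pn)$. Adding the two bounds gives $|C|\ge 8n-8pn$.

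The only delicate point is the edge-disjointness step. It must use the fact that $C$ is an edge set with prescribed parities, rather than an arbitrary connected set joining the four points. Without this, two paths could share edges: for $p=0$ the union of two side paths overlapping along the right side has only $6n$ edges. The parity argument rules this out, because shared edges would cancel mod $2$ in the degree count.
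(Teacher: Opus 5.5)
Your proposal is correct and follows essentially the same route as the paper: remove a simple path from $v_1$ to $v_3$ (the paper says ``simple'' explicitly, which is what makes the parity change only at the endpoints), use the even-degree property to get an edge-disjoint path from $v_2$ to $v_4$, and bound each path by $4n-4pn$. Your explicit $\ell_1$-distance computation and the $p=0$ example showing why edge-disjointness matters simply spell out details the paper leaves implicit.
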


\begin{proof}
Remove from $C$ a simple path from $v_1$ to $v_3$. What's left is a graph of even degree at every vertex except $v_2$ and $v_4$. $v2$ and $v4$ are in the same component of this graph, because the total degree of a component is even. Therefore there is a path from $v_2$ to $v_4$. The two paths are disjoint and each contains at least $4n - 4pn$ edges.
\end{proof}

Let $L = 8n-8np$. We write the length as $|C|=L + \ell$, for some $\ell \geq 0$.  Let $\cC_C $ be the set of configurations in $\cC$ that have $C$ as their cross. 

We will write the weight of a configuration $\sigma$ as $\lambda^{-H(\sigma)}$, $\lambda = e^{2\beta} = e^{2/T}$, and note that the energy $H(\sigma)$ is the number of edges in the contour representation of $\sigma$. 

\begin{lemma}\label{lem:cross} For any  cross $C$, we have
$$\pi(\cC_C ) \leq \lambda^{-(4n-8pn+\ell)}.$$
\end{lemma}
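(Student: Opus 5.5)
Each $\sigma\in\cC_C$ will be mapped injectively to a configuration of larger weight, using a Peierls argument. The ratio $\pi(\cC_C)=\sum_{\sigma\in\cC_C}\pi(\sigma)$ is then bounded by comparing against $\sum_\sigma \pi(\Psi(\sigma))\le 1$. The natural target is a configuration in which the cross is replaced by two short boundary paths. Concretely, I will pair $v_1$ with $v_4$ and $v_2$ with $v_3$ by running along the boundary of $S_n$ through the nearby corners. These two paths have total length $2\cdot 4pn=8pn$. Since $|C|=L+\ell=8n-8pn+\ell$, the energy drops by at least $8n-8pn+\ell-8pn = 8n-16pn+\ell$. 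That is more than the claimed exponent $4n-8pn+\ell$, so there is slack to spare, which I expect to spend on the steps below.

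Step 1: define $\Psi$. Given $\sigma\in\cC_C$, let $\sigma'$ be the configuration whose contour representation is $\Gamma(\sigma)\setminus C$ together with the two corner paths $P_{14}\cup P_{23}$. The union is taken as a symmetric difference, so that where the new paths overlap closed contours of $\Gamma(\sigma)$ the edges cancel and parities stay correct. Every vertex other than $v_1,\dots,v_4$ keeps even degree and the $v_i$ keep odd degree, so this is a valid contour set and determines a unique spin configuration (the boundary spins are fixed). Symmetric difference can only remove extra edges, so $H(\sigma')\le H(\sigma)-|C|+8pn$. This gives $\pi(\sigma)\le \lambda^{-(8n-16pn+\ell)}\pi(\sigma')$.

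Step 2: injectivity. Since $C$ is fixed, $\Gamma(\sigma)=\bigl(\Gamma(\sigma')\triangle(P_{14}\cup P_{23})\bigr)\cup C$ is recovered from $\sigma'$. One needs that $C$ and the remaining contours are edge-disjoint, which holds because $C$ is defined as a maximal component of $\Gamma(\sigma)$. Hence $\Psi$ is injective on $\cC_C$, and summing gives $\pi(\cC_C)\le \lambda^{-(8n-16pn+\ell)}\sum\pi(\sigma')\le \lambda^{-(8n-16pn+\ell)}$. This is at most $\lambda^{-(4n-8pn+\ell)}$ because $8n-16pn\ge 4n-8pn$ for $p\le 1/2$.

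Main obstacle. The corner paths may intersect $C$ itself, or in the Ising boundary setting may lie on the hard-wired boundary, where the symmetric difference with $C$ behaves differently. If that happens, I will instead remove $C$ and add only a single path of length about $4n-8pn$ (the shorter of the two ways to reroute), accepting the weaker but stated exponent $4n-8pn+\ell$. Alternatively, I will define the new paths inside $\Gamma(\sigma)\triangle C$ and check that removing $C$ first leaves a graph in which $P_{14}\cup P_{23}$ can be XORed in without touching $C$'s edges. That check is where I expect the care to be needed; the energy accounting itself is routine.
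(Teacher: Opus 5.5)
\textbf{Gap: the length of the replacement paths.} Your map is the paper's map: delete $C$, take the symmetric difference of $\Gamma(\sigma)\setminus C$ with boundary paths joining $v_1$ to $v_4$ and $v_3$ to $v_2$, and invert by undoing the symmetric difference and adding $C$ back. The injectivity step is fine. The energy accounting in Step 1, however, is wrong.

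Because the $v_i$ are related by $90^\circ$ rotations, they are equally spaced around the perimeter $8n$ of $S_n$. The point $v_1=(2pn,2n)$ sits $2pn$ from the upper-left corner, but $v_4=(0,2pn)$ sits $2pn$ from the \emph{lower}-left corner. So the boundary path from $v_1$ to $v_4$ through $(0,2n)$ has length $2pn+(2n-2pn)=2n$, and likewise the path from $v_3$ to $v_2$. In fact $2n$ is the $\ell_1$ distance between any two consecutive $v_i$. Hence any two paths pairing up the four frustrated vertices have total length at least $4n$, and a total of $8pn$ is impossible for $p<1/2$.

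Consequently the inequality $H(\sigma')\le H(\sigma)-|C|+8pn$ is false. The bound $\pi(\cC_C)\le\lambda^{-(8n-16pn+\ell)}$ you derive from it is not just unproved but too strong to hold. At $p=0$ it would charge $\lambda^{-8n}$ for a minimal cross of length $8n$, although that cross exceeds the minimum energy $4n$ by only $4n$ edges.

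\textbf{The fix.} With the correct lengths, at most $4n$ edges are added. This gives $H(\sigma')\le H(\sigma)-(8n-8pn+\ell)+4n$, i.e.\ $\pi(\sigma')\ge\lambda^{4n-8pn+\ell}\pi(\sigma)$. Together with injectivity, this yields exactly the stated bound, which is the paper's proof. There is no slack left to spend.

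\textbf{The ``main obstacle'' and the fallback.} The obstacle you describe is not really one. Since $C$ is deleted before the symmetric difference is taken, overlaps of the new paths with $C$ or with closed contours affect neither the parities nor the bound on added edges. Your fallback of adding only a single path would not work. After deleting $C$, which is a full component of $\Gamma(\sigma)$, every $v_i$ has degree $0$. A single path restores odd degree at only two of them, so the result is not a contour configuration of the frustrated model.
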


\begin{proof}  
We define an injective map $\psi_C: \cC_C \rightarrow \Omega$ so that
$\pi(\psi_C(\sigma)) \geq \pi(\sigma) \lambda^{(L-4n+\ell)}$.
Given this  map, we find
$$
1 \ =\ \pi(\Omega)  \geq  \sum_{\sigma\in\cC_C } \pi(\psi_C(\sigma))
\ \geq \  \sum_{\sigma\in\cC_C } \pi(\sigma) \lambda^{(L-4n +\ell)} 
\ = \ \lambda^{(4n-8pn+\ell)} \pi(\cC_C ).
$$

\noindent The map $\psi_C$ is defined by removing $C$; 
then, along the upper-left boundary of $\Lambda$ between $v_1$ and $v_4$ we add each edge not in $\sigma \setminus C$ and remove each edge in $\sigma \setminus C$; then, along the lower-right boundary of $\Lambda$ between $v_3$ and $v_2$ we add each edge not in $\sigma \setminus C$ and remove each edge in $\sigma \setminus C$.
\end{proof}

\begin{theorem}\label{thm:peierls}
Let $\Lambda \subset {\mathbb{Z}}^2$ be an $(2n+1) \times (2n+1)$ lattice region and $0 \leq p \leq 1/2$ define a family of $p$-shifted mixed boundary conditions on $\Lambda$.  
Let  $\Omega$ be the set of all Ising configurations and let $\cC$ be the Ising configurations containing  a cross.  Then
$$\pi(\cC) \leq f(n) e^{-cn},$$
for some polynomial $f(n)$ and constant $c>0$, whenever $\lambda^{(1-2p)} > 3^{(2-2p)}.$
\end{theorem}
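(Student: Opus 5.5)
We sum the bound of Lemma~\ref{lem:cross} over all possible crosses $C$. Every configuration in $\cC$ has a unique cross, taken as the maximal component of its contour containing $v_1,\dots,v_4$. Therefore
$$\pi(\cC) \;=\; \sum_{C} \pi(\cC_C) \;\le\; \sum_{\ell \ge 0} N(L+\ell)\,\lambda^{-(4n-8pn+\ell)},$$
where $N(m)$ is the number of possible crosses with exactly $m$ edges, and $L=8n-8pn$ as before. The whole argument then reduces to a counting bound on $N(m)$ together with a geometric series.

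\textbf{Counting crosses.} A cross is a connected subgraph of $\overline{\Lambda}$ containing the fixed vertex $v_1$. I would bound $N(m)$ by encoding each cross as a walk from $v_1$, for example a depth-first traversal of a spanning tree. At each step the walk has at most $3$ choices, since it never immediately backtracks and the tree edges can be traversed with bookkeeping. This gives $N(m) \le f(n)\,3^{m}$ for a polynomial $f(n)$ absorbing starting choices and lower-order overhead.

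A cruder bound, such as $N(m)\le 4^{2m}$ from an Euler tour, would only change the constant in the condition, so some care is needed to get exactly $3^{m}$. The cleanest route is Lemma~\ref{lem:min-cross}: the cross decomposes into two paths, $v_1\to v_3$ and $v_2\to v_4$, together with even cycles. Each path is a non-backtracking walk from a fixed endpoint, so there are at most $3^{\text{length}}$ of them. The leftover cycles attached to the cross can also be encoded by $3$-way choices, up to polynomially many choices of attachment points. The total is at most $f(n)\,3^{L+\ell}$.

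\textbf{Summing.} Substituting $N(L+\ell)\le f(n)\,3^{L+\ell}$ gives
$$\pi(\cC)\le f(n)\,3^{8n-8pn}\,\lambda^{-(4n-8pn)}\sum_{\ell\ge0}(3/\lambda)^{\ell}.$$
The series converges once $\lambda>3$. This follows from the hypothesis $\lambda^{1-2p}>3^{2-2p}$, because $(2-2p)/(1-2p)\ge 2$, so $\lambda > 9$. The prefactor equals
$$\bigl(3^{2-2p}/\lambda^{1-2p}\bigr)^{4n},$$
which is $e^{-cn}$ with
$$c = 4\bigl((1-2p)\ln\lambda-(2-2p)\ln 3\bigr)>0,$$
exactly under the stated condition. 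Absorbing the constant $1/(1-3/\lambda)$ into $f(n)$ finishes the proof.

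\textbf{Main obstacle.} The step I expect to be delicate is justifying the entropy bound $3^{m}$, up to a polynomial factor, for crosses, which are not simple paths. Handling the branching and intersections so that the encoding really costs only $3$ choices per edge is the crux. If that fails, I would fall back on the standard lattice-animal bound $N(m)\le f(n)\,\mu^{m}$ with a slightly larger $\mu$, which would weaken the constant in the hypothesis.
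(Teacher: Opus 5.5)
Your proposal matches the paper's proof: sum Lemma~\ref{lem:cross} over all crosses, bound the number of crosses with $L+\ell$ edges by $3^{L+\ell}$ up to a polynomial factor, and sum to get $\mathrm{poly}(n)\,(3^{2-2p}\lambda^{-(1-2p)})^{4n}$. The paper simply asserts that count without proof. For the step you flag, neither of your two justifications works as written:
\begin{itemize}
\item Drop the spanning-tree traversal: it backtracks and misses non-tree edges.
\item Do not charge an attachment point per leftover cycle: there can be up to $\ell/4$ cycles, so this would cost $n^{\Theta(\ell)}$ and ruin the geometric series.
\end{itemize}
Instead, add one auxiliary edge $v_2v_4$. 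The connected cross then has exactly two odd vertices, $v_1$ and $v_3$, and is traced by a single Euler trail from $v_1$. A trail in the simple lattice never immediately reverses, so this gives $O(m\,3^{m})$ crosses with $m$ edges, the extra $O(m)$ recording where the auxiliary jump occurs.
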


\begin{proof}  
By Lemma~\ref{lem:cross},
$$
\pi(\cC)
\ \leq \ \sum_C \lambda^{-(4n-8pn+\ell)} 
\  \leq \ \sum_{\ell \geq 0} \lambda^{-(4n-8np+\ell)} 3^{(8n-8np+ \ell)}
 \leq \ 4n^2 (3^{(2-2p)}\lambda^{-(1-2p)})^{4n},$$
which is exponentially small when $\lambda^{(1-2p)} > 3^{(2-2p)}.$
The second inequality holds because there are at most $3^{(8n-8np+ \ell)}$ ways to choose a cross of length 
$8n-8np+\ell$.
\end{proof}

\noindent Thus, when $\lambda^{(1-2p)} > 3^{(2-2p)}$ we have that the size of the cut is exponentially small, and therefore the conductance of the graph is also exponentially small.  By Theorem~\ref{CondThm}, this implies that the chain takes exponential time to mix.

\begin{corollary}
Glauber dynamics for the Ising model on $\Lambda$ with $p$-shifted mixed boundary conditions takes time at least $e^{cn}$ to mix, for some constant $c>0$, when $\lambda^{(1-2p)} > 3^{(2-2p)}.$
\end{corollary}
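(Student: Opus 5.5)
The corollary follows by combining Theorem~\ref{thm:peierls} with the conductance bound of Theorem~\ref{CondThm}. The plan is to exhibit a set $S$ with $\pi(S)\le 1/2$ whose edge boundary in the Glauber chain lies inside the bottleneck $\cC^* = \cC\cup\Omega_{-1}\cup\Omega_1$. The natural choice is $S=\Omega^-$ or $S=\Omega^+$, whichever has smaller stationary weight. In the symmetric setting the map rotating configurations by $90$ degrees (composed with a global spin flip, so that the $p$-shifted boundary condition is preserved) exchanges $\Omega^-$ and $\Omega^+$. So $\pi(\Omega^-)=\pi(\Omega^+)$, and both are at least $(1-\pi(\Omega_0))/2$.

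\textbf{Step 1: the cut is small.} A single Glauber move changes one spin, hence changes the contour $\Gamma(\sigma)$ only on the four dual edges around one face. So $\alpha$ and $\beta$ each change by at most a bounded amount, and in particular a move leaving $\Omega^-$ lands in $\cC^*$, or starts from it. Thus
\[
\sum_{x\in S,\,y\notin S}\pi(x)P(x,y)\le \pi(\cC^*)
\]
up to the observation that, by reversibility, the flow out of $S$ equals the flow into $S$ from $\cC^*$. I then bound $\pi(\cC^*)$. Theorem~\ref{thm:peierls} gives $\pi(\cC)\le f(n)e^{-cn}$ when $\lambda^{(1-2p)}>3^{(2-2p)}$. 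For $\Omega_{\pm1}$ I reuse the same Peierls computation: following the discussion in Section~\ref{sec:contour}, a configuration there is a ``cross with one defect,'' namely two fault lines joined by a single dual edge. I append that edge to obtain a cross-like object. The argument of Lemma~\ref{lem:cross} goes through with an extra factor polynomial in $n$ for the location of the defect and a constant factor in $\lambda$. Hence $\pi(\cC^*)\le f'(n)e^{-cn}$.

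\textbf{Step 2: the sides are large.} I need $\pi(S)$ bounded below by a constant (or at least by $e^{-o(n)}$). Since $\Omega_0\setminus\cC$ still has orientation $0$ and is not obviously small, I would instead take $S=\Omega^-$ directly and argue $\pi(\Omega^-)\le 1/2$ by the symmetry above. It remains to show $\pi(\Omega^-)$ is not tiny. The all-$+$ interior configuration with the boundary contour pairing $v_1$ to $v_4$ and $v_2$ to $v_3$ along the short sides has weight comparable to $\lambda^{-8pn}$ relative to the ground state, while the cross has weight at most $\lambda^{-(8n-8pn)}$. This suggests comparing directly: $\pi(\cC^*)/\pi(\Omega^-)\le f(n)\lambda^{-(8n-16pn)}3^{8n-8pn+\ell}$-type ratios. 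These are exponentially small under the same hypothesis, which is what the proof of Theorem~\ref{thm:peierls} really bounds through the map $\psi_C$, whose image lies in $\Omega^-\cup\Omega^+$.

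\textbf{Step 3: conclude.} The conductance satisfies $\Phi\le \pi(\cC^*)/\pi(S)\le e^{-c'n}$. Theorem~\ref{CondThm} then gives $\tau(\epsilon)\ge e^{c'n}/4$ for fixed $\epsilon$, which is the corollary after adjusting $c$. I expect the main obstacle to be Step 1's handling of $\Omega_{\pm1}$ together with the claim that Glauber moves cannot jump from $\Omega^-$ to $\Omega^+$ without entering $\cC^*$. I would first try a careful local case analysis of how flipping one spin alters the component structure near $v_1$.
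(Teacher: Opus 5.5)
Your skeleton matches the paper's: the corollary is Theorem~\ref{thm:peierls} fed into Theorem~\ref{CondThm} with $S=\Omega^-$, cut set $\cC^*$, and $\pi(\Omega^-)=\pi(\Omega^+)$ by symmetry. Rotation composed with a global flip is the right symmetry. The paper's proof is only the remark after Theorem~\ref{thm:peierls}, and it leaves the cut property and the lower bound on $\pi(\Omega^-)$ implicit. Your Step~2, however, rests on a miscalculation.

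Because the $v_i$ are shifted in the same rotational direction, every boundary arc between consecutive frustrated vertices has length $(2n-2pn)+2pn=2n$, for every $p$. So the configuration with contours along the arcs $v_4v_1$ and $v_2v_3$ has energy $4n$ and is itself a ground state. It does not have relative weight $\lambda^{-8pn}$. The energy gap to a cross of length $L+\ell$ is $4n-8pn+\ell$, not $8n-16pn$, and this gap is exactly where the hypothesis $\lambda^{1-2p}>3^{2-2p}$ comes from.

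Moreover, comparing a cross with a single configuration of $\Omega^-$ cannot bound $\pi(\cC^*)/\pi(\Omega^-)$. The closed contours accompanying the cross carry entropy of order $e^{\Theta(n^2)}$. The rigorous version is your parenthetical remark. The map $\psi_C$ sends $\cC_C$ injectively into $\Omega^+$, because the new $v_1$--$v_4$ and $v_2$--$v_3$ components lie on opposite sides of the removed $v_1$--$v_3$ path. Hence $\pi(\cC)\le \pi(\Omega^+)\sum_C \lambda^{-(4n-8pn+\ell)}$.

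There is an even simpler route. In this Ising setting $\Omega_0=\cC$, so your worry about $\Omega_0\setminus\cC$ is moot. The $v_i$ lie in cyclic order on the outer boundary, so a $v_1$--$v_3$ path and a $v_2$--$v_4$ path must share a vertex, and the component of $v_1$ always contains $v_2$ or $v_4$. Your Step~0 then gives $\pi(\Omega^-)=(1-\pi(\cC))/2\ge 1/4$. This is what the paper tacitly uses when it writes $\Phi\le 4\pi(\cC^*)$.

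Step~1's justification is also too weak. Bounded changes in $\alpha$ and $\beta$ do not stop $\gamma$ from jumping over $\{-1,0,1\}$. The needed local fact is short:
\begin{itemize}
\item A flip toggles the four dual edges of a unit square $Q$. Suppose it takes $\sigma\in\Omega^-$, with components $A\ni v_1,v_2$ and $B\ni v_3,v_4$, to $\sigma'\in\Omega^+$.
\item If $A$ avoided the vertices of $Q$, it would survive intact in $\sigma'$ and keep $v_1$ joined to $v_2$. If $B$ avoided them, $v_3$ would stay joined to $v_4$, putting $\sigma'$ in $\cC$. Both are contradictions.
\item So both $A$ and $B$ meet $Q$, and $d(A,B)\le 2$. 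If $d(A,B)=2$, they meet $Q$ only at two opposite corners, and $\sigma$ contains no edge of $Q$. Then $\sigma'\supseteq A\cup B\cup Q$ lies in $\cC$, again a contradiction.
\item Hence $\sigma\in\Omega_{-1}$, and symmetrically $\sigma'\in\Omega_1$.
\end{itemize}
Every move out of $\Omega^-$ therefore ends in $\cC\cup\Omega_1$. By reversibility the flow out of $\Omega^-$ is at most $\pi(\cC\cup\Omega_1)$.

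Your Peierls bound for $\Omega_{\pm1}$ is fine. Doubling the connecting edge and rerunning the proof of Lemma~\ref{lem:min-cross} gives $|A|+|B|\ge 8n-8pn-2$, which costs only a constant factor in $\lambda$ and a polynomial factor for the defect's location.
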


\noindent Notice that this gives $\lambda > 9$ when $p=0$ and $\lambda > 3^{(2^{(k-1)}+1)}$
 when $p=1/2-1/2^k$ and
 when $p=1/2$ this fails to give any useful bound.

\section{Slow Mixing for Frustrated Spin Glasses Using Free Energy}\label{sec:slow}
We will now proceed to extend the result in Section~\ref{sec:cross} by establishing slow mixing below some temperature for spin glasses with four well-separated frustrated vertices.   

In this setting we define $\Lambda$ as a $kn \times kn$ lattice region, $k \geq 2$. 
Four distinguished faces are symmetric around the center of $\Lambda$ under $90$ degree rotations. The centers of these faces are four vertices $v_1,..,v_4$ that again lie on the boundary of a $2n \times 2n$ square $S$ within $\overline{\Lambda}$ centered at $(n, n)$. 
As in Section~\ref{sec:contour} we define $\cC$ to be the set of contour configurations in which $v_1$ is connected to both $v_2$ and $v_4$, and we define the cross in such a configuration as the component containing $v_1$. 
We define $\cC_C$ to be the set of contour configurations whose cross is $C$.
The argument in Section~\ref{sec:cross} fails when $p = 1/2$, in particular when $\ell = o(n)$.
The length of the cross $C$ in that case is $4n + \ell$,  
and our injective map $\psi_C$ removes $C$ and replaces it with two paths of total length $4n$. 
The loss of energy, $H(\sigma) - H(\psi_C(\sigma)) = \ell$, is too small to show that $\sigma$ has exponentially small probability.

The remedy comes from noticing that in exactly the case $\ell = o(n)$, $C$ is nearly a minimal cross 
and there are many alternative choices of $\psi_C$. Recall that $\psi_C$ was defined by removing $C$ and adding paths along the boundary of the $2n\times 2n$ square from $v_1$ to $v_4$ and from $v_3$ to $v_2$.

Instead of restricting ourselves to paths along the boundary of the square, we will allow any monotone ``staircase'' path, with the restriction that the path does not intersect the axis-aligned square region $S_{\ell}$ of side length $2n - 4np + \ell$ centered at the center of $\Lambda$.  
We choose this restriction to ensure loss of energy as a consequence of Lemma~\ref{lem:staircase-cross} below. 
As shown in Figure~\ref{fig:min-cross-b}, when $\ell=0$, we allow any monotone path in one of the shaded rectangles of dimensions $2np\times (2n-2np)$, and as $\ell$ increases the monotone paths must stay in the shaded region that avoids $S_{\ell}$.

Our new strategy is to use {\it all} possible choices of $\psi_C$ for small $\ell$, thereby defining an exponential family of images of each configuration $\sigma \in \cC_C $. We will define a function $\Psi_C$ that maps $\sigma$ to the union of possible $\psi_C(\sigma)$ defined by different pairs of monotone paths. Figure~\ref{fig:min-cross-b} also shows the tradeoff between energy and entropy for our method. 
As $\ell$ increases and $p$ decreases, the energy loss due to the map increases. 
As the size of each shaded area decreases, the number of possible paths, bounded above by $\binom{2n}{2np}$, also decreases. 
This is what we mean by a decrease in entropy. 

Just as we needed $\psi_C$ to be injective in Section~\ref{sec:cross}, we would hope for our new map to have the property that two different configurations map to disjoint sets of configurations. 
We were not able to find such a map. 
However, we are able to define $\Psi_C$ to pass a small amount of ``side information,'' and with this definition we will get a disjointness property that serves our purpose.
The side information is in the form of tokens placed on certain edges along each of the two paths that define the configuration $\sigma$ is mapped to. Formally, for each path this information is encoded as a binary string of length $2n$: \texttt{0} for any plain edge, \texttt{1} for an edge with a token. The nice property that will make this side information small is that no two adjacent edges of a path are occupied by tokens.

Let $B(m)$ be the set of binary strings of length $m$ with no consecutive \texttt{1}'s. 
Let $B = B_C = B(2n)$.
Formally, we will define a function $\Psi_C: \cC_C  \rightarrow 2^{\Omega \times B \times B}$
that has the nice properties in the following lemma. 
To get our hands on the set of mapped configurations minus the tokens, 
we define the projection operator $\Pi: 2^{\Omega \times B \times B} \rightarrow 2^{\Omega}$, 
so that $\Pi(\{\sigma_i, b_i, b_i'\}) = \{\sigma_i\}$. 
Informally, $\Pi \circ \Psi_C$ is the map from one configuration to a set of configurations.

\begin{figure}[t]
\centering
 \begin{subfigure}{0.4 \linewidth}
    \includegraphics[scale=.25]{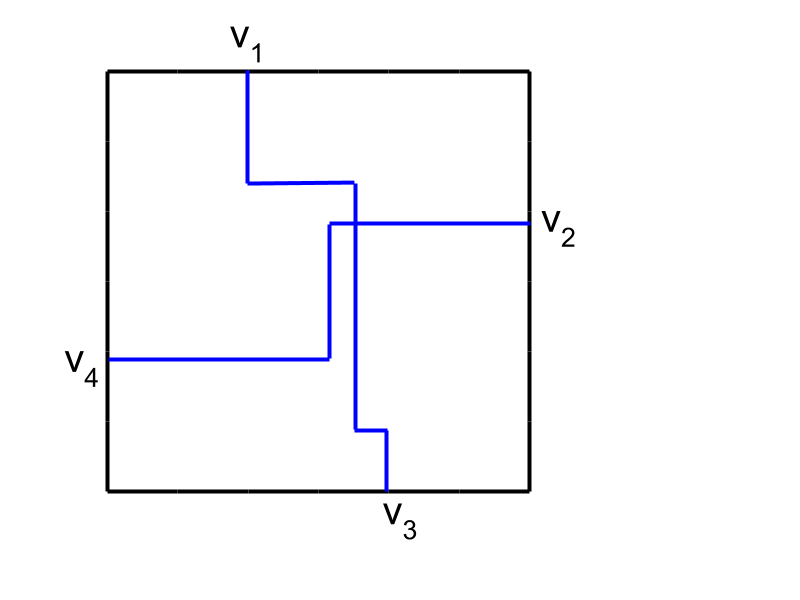}
    \subcaption{\centering}
    \label{fig:min-cross-a}
  \end{subfigure}
  \begin{subfigure}{0.4 \linewidth}
    \includegraphics[scale=.25]{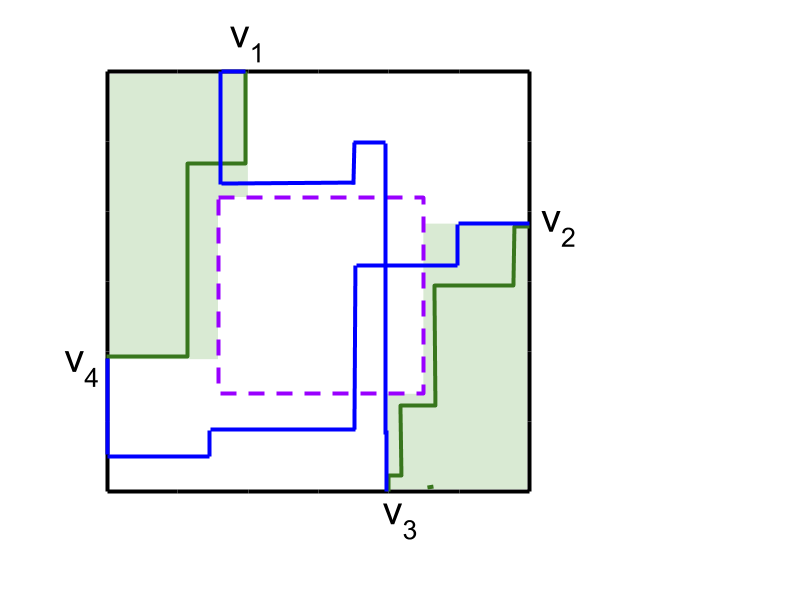}
    \subcaption{\centering}
    \label{fig:min-cross-b}
  \end{subfigure}
\caption{(a) A minimal cross, of length $L$. (b) A cross of length $L+\ell$ in blue, with $S_{\ell}$ in purple and two possible monotone paths in green. Any monotone path in either shaded region is possible.}
\label{fig:min-cross}
\end{figure}

In the following lemmas, fix $0\leq p \leq 1/2$ and let $L = 8n-8pn$. 
For $b<0$ let $\binom{a}{b} = 1$.

First we give a lower bound on the number of pairs of monotone paths, which we call {\em staircases}.
\begin{definition}
An {\em upper staircase} in $\overline{\Lambda}$ is a path of $2n$ west and south edges starting at $v_1$ and ending at $v_4$. A {\em lower staircase} is a path from $v_3$ to $v_2$ which, when the configuration is rotated $180\degree$, becomes an upper staircase.
\end{definition}
Note that the edges on a staircase need not be edges of a particular configuration. 
For the rest of the section we define $h(x) = -x\log_2(x) - (1-x)\log_2(1-x)$.

\begin{lemma}\label{lem_num_staircases}
Let $\ell \leq 4np^2$. 
The number of pairs of staircases, one upper, one lower, that do not intersect $S_{\ell}$ is at least 
$2^{4n h(p)^4}\mathrm{poly}(n)$.
\end{lemma}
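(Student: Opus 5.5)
The plan is to exhibit an explicit family of upper staircases avoiding $S_\ell$, count it with binomial coefficients, and then get the lower staircases for free. The map $v\mapsto$ (rotation by $180\degree$ about the center) sends upper staircases to lower staircases and fixes $S_\ell$, since $S_\ell$ is centered at the center of $\Lambda$. So it suffices to show that the number of upper staircases avoiding $S_\ell$ is at least $2^{2n h(p)^4}/\mathrm{poly}(n)$, and then square.

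First I make the avoidance condition explicit. An upper staircase lives in the rectangle $R=[0,2pn]\times[2pn,2n]$, from $v_1=(2pn,2n)$ to $v_4=(0,2pn)$, using $2pn$ west and $2n-2pn$ south steps. The square $S_\ell$ is $[2pn-\ell/2,\,2n-2pn+\ell/2]^2$. Its intersection with $R$ is the strip $[2pn-\ell/2,2pn]\times[2pn,\,2n-2pn+\ell/2]$ in the lower-right corner of $R$; this strip is nonempty when $\ell\le 4np^2\le 4pn$. A monotone path avoids this strip exactly when it crosses the vertical line $x=2pn-\ell/2$ at some height $y>2n-2pn+\ell/2$.

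Next I count paths through a fixed crossing point. Fix an integer $j\ge 1$ and let $P_j=(2pn-\ell/2,\;2n-2pn+\ell/2+j)$. Every monotone path from $v_1$ to $P_j$ and then from $P_j$ to $v_4$ avoids $S_\ell$. Before $P_j$ the path stays above the strip, and after $P_j$ it stays to the left of it. The number of such paths is
\[
N_j=\binom{2pn-\ell/2-j}{\ell/2}\binom{2n-2pn-\ell/2+j}{2pn-\ell/2}.
\]
Since the number of avoiding staircases is at least $\max_j N_j$, I would pick $j$ to balance the two factors. The natural choice makes the slope of the first segment roughly match that of the unconstrained optimum, i.e.\ $j\approx(1-2p)\cdot\ell/(2p)$ rounded and clipped to $[1,2pn-\ell/2]$. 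With this choice the concatenated path looks like a typical $\binom{2n}{2pn}$ path, and the loss comes only from forcing it through one point.

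Applying Stirling, $\binom{a}{b}\ge 2^{a h(b/a)}/(a+1)$, gives $\log_2 N_j\ge 2n\,F(p,t)-O(\log n)$ for an explicit function $F$ of $p$ and $t=\ell/(4pn)\in[0,p]$. At $t=0$ one should recover $F=h(p)$, i.e.\ the full count $\binom{2n}{2pn}$.

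The main obstacle is the final analytic inequality $F(p,t)\ge h(p)^4$ for all $0\le p\le 1/2$ and $0\le t\le p$. Crude choices of the intermediate point fail badly here. Forcing the path through the corner-adjacent point fails at $t=0$. A single fixed straight-line split fails at $p=1/2$, $\ell=n$: there it gives about $1.38n$ against the needed $2n$. So the choice of $j$ really matters. I would first try to show that, for the balanced $j$, $F(p,t)\ge h(p)\,(1-t)^{c}$ or $F\ge h(p)-O(t)$ via concavity of $h$. I would then use $t\le p$ together with $h(p)\ge 2p$ on $[0,1/2]$ to absorb the loss into the generous slack between $h(p)$ and $h(p)^4$, which equals $h(p)(1-h(p)^3)$ and matters most near $p=1/2$. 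If a clean inequality is elusive, a fallback is to check $F-h^4>0$ separately on $p\le p_0$, where $t\le p^2$ is tiny relative to $p$, and on $p\ge p_0$ by continuity and compactness. Either way, squaring the one-sided bound gives the claimed $2^{4nh(p)^4}\mathrm{poly}(n)$.
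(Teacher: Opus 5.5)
\textbf{There is a genuine gap.} Your framework is the one the paper uses: force the upper staircase through one lattice point outside $S_\ell$, count with a product of two binomials, and get the lower staircases by the $180$-degree rotation. But the proposal stops exactly where the proof has to happen. The inequality $F(p,t)\ge h(p)^4$ is left open, and the routes you sketch for it cannot work.

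\emph{Why your routes fail.} The slack $h(p)-h(p)^4$ does not matter most near $p=1/2$; it vanishes there. Every count through a single point is at most $\binom{2n}{2pn}\le 2^{2nh(p)}$, so you always have $F\le h(p)+o(1)$. Hence at $p=1/2$ you need $F(1/2,t)=1$ up to $o(1)$ for every $t\in[0,1/2]$, i.e.\ no exponential loss at all. An estimate $F\ge h(p)-O(t)$ fails there. A compactness argument on $[p_0,1/2]$ cannot produce a positive margin either, because $F-h^4\le o(1)$ at $p=1/2$.

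\emph{Two computational slips.} These also make the problem look harder than it is.
\begin{enumerate}
\item From $v_1$ to $P_j$ there are $\ell/2$ west and $2pn-\ell/2-j$ south steps. So the factors are $\binom{2pn-j}{\ell/2}$ and $\binom{2n-2pn+j}{2pn-\ell/2}$. Your top entries are short by $\ell/2$; at $p=1/2$, $\ell=n$ this makes your $N_j=0$ for every $j\ge 1$.
\item The slope-matching choice is $j=2pn-\ell/(2p)$, not $(1-2p)\ell/(2p)$. The latter is about $0$ at $\ell=0$, i.e.\ the corner point you yourself reject.
\end{enumerate}

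\emph{The missing idea.} The hypothesis $\ell\le 4np^2$ is precisely the condition that the straight segment from $v_1$ to $v_4$ misses $S_\ell$. The segment meets the line $x=2pn-\ell/2$ at height $2n-\ell(1-p)/(2p)$, and this is at least $2n-2pn+\ell/2$ iff $\ell\le 4np^2$. So choose the forced point on that segment.
\begin{itemize}
\item With $j=2pn-\ell/(2p)\ge 0$, your corrected count is $\binom{\ell/(2p)}{\ell/2}\binom{2n-\ell/(2p)}{2pn-\ell/2}$. Both binomials have ratio exactly $p$, so the product is $2^{2nh(p)}/\mathrm{poly}(n)$.
\item The paper does the same with the fixed point $x=(2np(1-p),\,2n-2np(1-p))$, which is the corner of $S_\ell$ when $\ell=4np^2$. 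It obtains $\binom{2pn}{2p^2n}\binom{2n(1-p)}{2np(1-p)}=2^{2nh(p)}/\mathrm{poly}(n)$.
\end{itemize}
Hence $F(p,t)\equiv h(p)\ge h(p)^4$, and no analysis remains.

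This also refutes your claim that a straight-line split gives only about $1.38n$ at $p=1/2$, $\ell=n$. The point $(n/2,3n/2)$ on the segment gives $\binom{n}{n/2}^2\approx 2^{2n}/n$.
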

\begin{proof}
We will show that the number of upper staircases is at least $2^{2n h(p)^2}\mathrm{poly}(n)$.
The bound on lower staircases is the same.
Recall that $v_1$ and $v_4$ are on the boundary of a $2n\times 2n$ square centered at $(n, n)$, and the corner of the square between $v_1$ and $v_4$ is the vertex $(0, 2n)$. 
Given the bound on $\ell$, $S_{\ell}$ does not contain the vertex $x = (2np(1-p), 2n - 2np(1-p))$.
The number of upper staircases is at least the number of upper staircases that contain $x$. 
A monotone path from $v_1$ to $x$ consists of $2n - 2np$ edges of which $2np(1-p)$ are south edges, 
and a monotone path from $x$ to $v_4$ consists of $2np$ edges of which $2np^2$ are south edges.
The number of such staircases is
\begin{equation}
\binom{2n(1-p)}{2np(1-p)}\binom{2np}{2np^2} = 2^{4n h(p)^4}\mathrm{poly}(n),
\end{equation}
using Stirling's formula.
\end{proof}

\begin{lemma}\label{lem:mapping-bound}
Let $C$ be a cross of length $|C| = L + \ell$ with $\ell \leq 4np^2$.
There exists a function $\Psi_C: \cC_C  \rightarrow 2^{\Omega \times B \times B}$ such that 
$\forall \sigma, \sigma' \in \cC_C $, $\sigma'' \in \Pi\circ\Psi_C(\sigma)$,
$$
 \Psi_C(\sigma) \cap \Psi_C(\sigma') \ = \ \emptyset, $$
$$ |\Psi_C(\sigma)| \ \geq \  2^{4n h(p)^4}\mathrm{poly}(n), $$ 
$$ {\rm and} \  \ H(\sigma'') \ \leq \ H(\sigma) - (4n-8np+\ell/2),$$
\end{lemma}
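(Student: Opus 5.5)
For each pair $(P,Q)$ of staircases, $P$ upper and $Q$ lower, avoiding $S_\ell$ (Lemma~\ref{lem_num_staircases} supplies at least $2^{4n h(p)^4}\mathrm{poly}(n)$ of them), I will define one element $\psi_{P,Q}(\sigma)=(\sigma'',b,b')\in\Omega\times B\times B$ and set $\Psi_C(\sigma)=\{\psi_{P,Q}(\sigma)\}_{(P,Q)}$. The configuration $\sigma''$ generalizes $\psi_C$ from Lemma~\ref{lem:cross}. First delete the cross $C$ from $\Gamma(\sigma)$. The remainder $\Gamma(\sigma)\setminus C$ is a union of even closed contours. Then take the symmetric difference of this remainder with $P\cup Q$. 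After the XOR, every vertex has even degree except $v_1,\dots,v_4$, which have odd degree, so $\sigma''$ is a valid contour configuration. In it, $v_1$ is paired with $v_4$ and $v_3$ with $v_2$, which places $\sigma''$ in $\Omega^+$ or $\Omega^-$, away from the cut. Distinct pairs $(P,Q)$ give distinct triples, since $P$ and $Q$ will be recoverable from $(\sigma'',b,b')$. Hence $|\Psi_C(\sigma)|$ equals the number of admissible pairs, which gives the second claimed property.

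The energy inequality comes from counting edges. Removing $C$ lowers $H$ by $L+\ell=8n-8np+\ell$. The XOR with $P\cup Q$ raises $H$ by at most $|P|+|Q|=4n$. This already gives $H(\sigma'')\le H(\sigma)-(4n-8np+\ell)$, which is stronger than the bound required. The slack of $\ell/2$ should be kept in reserve, because the injectivity step may force modifications near $C$ that cost up to $\ell/2$. The role of the condition that $P,Q$ avoid $S_\ell$ is as follows. A cross of length $L+\ell$ cannot penetrate far past $S_\ell$ (this is Lemma~\ref{lem:staircase-cross}, referenced earlier), so $P,Q$ stay essentially disjoint from $C$. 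The removal of $C$ and the addition of $P,Q$ therefore do not cancel each other.

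The main obstacle is disjointness: I must show that $(\sigma'',b,b')$ together with the known $C$ determines $\sigma$. Given $\sigma''$, we know $C$ and we know that $\Gamma(\sigma)=C\cup\big((\Gamma(\sigma'')\triangle(P\cup Q))\big)$. So it suffices to recover $P$ and $Q$ from $\Gamma(\sigma'')$ using the tokens. Edges of $P$ that were absent from $\sigma$ survive in $\sigma''$. Edges of $P$ that coincided with closed contours of $\sigma$ were deleted. The path $P$ in $\sigma''$ therefore appears as a chain of surviving segments separated by gaps, and other contours cross or touch it.

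The plan for decoding is to trace the staircase from $v_1$, always stepping west or south. At each vertex the ambiguity is whether the next edge of $P$ is present or was cancelled, and, when both a west and a south contour edge are available, which one belongs to $P$. The token string $b$ marks exactly the edges where the tracing rule would otherwise choose wrongly. These are the first edges at which $P$ departs from the greedy choice, namely where a gap begins or where $P$ leaves a foreign contour. I will argue that such corrections never occur on two consecutive edges. After a token the rule resynchronizes for at least one step, because of monotonicity and the even-degree structure. This shows $b\in B(2n)$, and the same argument applied to $Q$ after a $180^\circ$ rotation handles $b'$. Verifying this no-two-consecutive property carefully, through a case analysis of the local degree patterns, is the step I expect to be hardest. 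If it fails, I would modify $\sigma''$ locally near the offending vertices and pay for those modifications with the $\ell/2$ energy slack.
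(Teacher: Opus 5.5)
\textbf{Genuine gap: the XOR map cannot satisfy disjointness, for any choice of tokens.}

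Under XOR, an edge that lies on both $P$ and a closed contour of $\sigma$ is absent from $\sigma''$, exactly as if it lay on neither. That information is lost. Worse, fix $\sigma''$ and take \emph{any} admissible pair $(P,Q)$. The set $\Gamma(\sigma'')\triangle P\triangle Q$ is automatically even everywhere, so $C\cup(\Gamma(\sigma'')\triangle P\triangle Q)$ lies in $\cC_C$ as soon as it avoids the vertices of $C$. A single $\sigma''$ can therefore have as many preimages as there are staircase pairs.

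Concretely, take $p=1/2$, $\ell=0$, and let $C$ be the straight cross along the lines $x=n$ and $y=n$ (so $S_0=\{(n,n)\}$). Fix staircases $P_0,Q_0$. For every upper staircase $P$ whose first step is west and last step is south, and every lower staircase $Q$ with the rotated property, put $\sigma_{P,Q}=C\cup(P_0\triangle P)\cup(Q_0\triangle Q)$. The added loops never touch the lines $x=n$, $y=n$. So these are $\binom{2n-2}{n-1}^2=2^{4n}/\mathrm{poly}(n)$ distinct elements of $\cC_C$, and every $\psi_{P,Q}(\sigma_{P,Q})$ has first coordinate $P_0\cup Q_0$. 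Since $|B(2n)|^2=O(\phi^{4n})=O(2^{2.78n})$, pigeonhole gives two different configurations with a common triple, so disjointness fails.

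Your fallback cannot help. The deficit is exponential, not a few local defects. And with $\ell=0$ there is no $\ell/2$ slack at all: the required bound is just $H(\sigma'')\le H(\sigma)$.

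\textbf{The missing idea} is what keeps the staircase legible in $\sigma''$. Instead of XOR-ing, the paper translates every contour of $\sigma$ lying in a white region between $S_U$ and $C$ one step away from the staircase. For regions to the left of $S_U$ this is one step south and east, and symmetrically elsewhere. The contours move into the room freed by deleting $C$, and only then is the staircase added.

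Collisions then survive only as double edges at corners. These are resolved by site flips and by tokens on west edges that are followed by south edges, which is why no two tokens are consecutive. The decoder walks the staircase from $v_1$ using these marks and then shifts the regions back.

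Lemma~\ref{lem:staircase-cross} is what makes the shift direction well defined: no region is bounded by both staircases. It does not say the staircases avoid $C$; they may share edges with it. The price of the shift is that edges of $C\cap(S_U\cup S_L)$ bordering two white regions are added twice. Lemma~\ref{lem:white-white-edges} shows there are at most $\ell/2$ of them, and that is where the $\ell/2$ in the statement comes from. Your sharper count $4n-8np+\ell$ is bought by discarding exactly the information that disjointness requires.
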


We postpone constructing the function $\Psi_C$ 
(and proving Lemma~\ref{lem:mapping-bound}) until the next subsection. 
Theorem~\ref{thm:all-p} is an analogue of Theorem~\ref{thm:peierls} that gives an exponential
bound for all $p,~0\leq p\leq 1/2$. 
As a corollary of Theorem~\ref{thm:all-p}, we will prove our main result, Theorem~\ref{thm:main},
 asserting slow mixing for spin glasses with frustration.

From now on we drop the $\mathrm{poly}(n)$ from our calculations.
We first bound the probability of the set of configurations containing a given cross $C$.
For use in the next two lemmas we define
\begin{equation*}
f(\ell, p) =
\begin{cases}
  4n (h(p)^4 - \log_2\phi), ~ \ell \leq 4np^2 \\
  \ell/2 \log_2\lambda, ~ \ell > 4np^2 ,
\end{cases}
\end{equation*}

\begin{lemma}\label{lem:piOmegaC}
For any cross $C$ of length $|C| = L + \ell$   we have
\begin{equation*}
\pi(\cC_C ) \leq  \lambda^{-(4n-8np+\ell/2)} 2^{-f(\ell, p)},
\end{equation*}
where 
$\phi = {(1+\sqrt{5})}/{2}$.
\end{lemma}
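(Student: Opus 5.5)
The plan is to split into the two cases in the definition of $f(\ell,p)$. In both cases the tool is the same summation trick as in Lemma~\ref{lem:cross}: an energy-decreasing map whose images are disjoint inside the probability space, summed against the total mass $\pi(\Omega)=1$. Recall that $\pi(\sigma)\propto\lambda^{-H(\sigma)}$, since $H$ counts contour edges.

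\textbf{Case $\ell > 4np^2$.} Here we only need one image per configuration, so we use the injective map $\psi_C$ of Lemma~\ref{lem:cross}. That lemma already gives
$$\pi(\cC_C)\le\lambda^{-(4n-8pn+\ell)}.$$
Since $f(\ell,p)=\tfrac{\ell}{2}\log_2\lambda$, we have $2^{-f(\ell,p)}=\lambda^{-\ell/2}$, and therefore
$$\lambda^{-(4n-8np+\ell/2)}\,2^{-f(\ell,p)}=\lambda^{-(4n-8np+\ell)}.$$
So the claimed bound coincides with the bound of Lemma~\ref{lem:cross}, and this case is immediate. One should note that Lemma~\ref{lem:cross} was stated for the boundary setting. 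Its map, which removes $C$ and adds two shortest boundary paths of the square $S$ (XOR-ing them with the remaining contours), works verbatim for any four frustrated vertices on $S$. This is because only parities of degrees matter and the weight depends only on $\Gamma(\sigma)$.

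\textbf{Case $\ell\le 4np^2$.} We invoke Lemma~\ref{lem:mapping-bound}. Write $\Psi=\Psi_C$ and $D=4n-8np+\ell/2$. Since the sets $\Psi(\sigma)$ for $\sigma\in\cC_C$ are pairwise disjoint subsets of $\Omega\times B\times B$,
$$\sum_{\sigma\in\cC_C}\ \sum_{(\sigma'',b,b')\in\Psi(\sigma)}\pi(\sigma'')\ \le\ \sum_{(\sigma'',b,b')\in\Omega\times B\times B}\pi(\sigma'')\ =\ |B|^2.$$
Each summand on the left satisfies $\pi(\sigma'')\ge\pi(\sigma)\lambda^{D}$ by the energy inequality of Lemma~\ref{lem:mapping-bound}. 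Also $|\Psi(\sigma)|\ge 2^{4nh(p)^4}$, up to the dropped $\mathrm{poly}(n)$. The left side is therefore at least $\pi(\cC_C)\,\lambda^{D}\,2^{4nh(p)^4}$.

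It remains to bound $|B|=|B(2n)|$. Binary strings of length $m$ with no two consecutive \texttt{1}'s are counted by the Fibonacci number $F_{m+2}$, so $|B(2n)|=F_{2n+2}=\Theta(\phi^{2n})$. Hence $|B|^2=\mathrm{poly}\cdot\phi^{4n}=2^{4n\log_2\phi}$. Combining,
$$\pi(\cC_C)\le\lambda^{-D}\,2^{-4n(h(p)^4-\log_2\phi)}=\lambda^{-D}\,2^{-f(\ell,p)},$$
as claimed.

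The main obstacle is really hidden in Lemma~\ref{lem:mapping-bound}, namely the disjointness of the $\Psi_C(\sigma)$ with token side information, which we are allowed to assume. Within this proof, the only delicate points are the following. First, one must sum over triples $(\sigma'',b,b')$ rather than over configurations, since $\Pi\circ\Psi_C$ need not be disjoint. This is exactly why the factor $|B|^2$ appears, and it is where the $\log_2\phi$ term enters. Second, the polynomial factors must be tracked consistently, and the paper has agreed to suppress them.
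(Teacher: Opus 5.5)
Your proof is correct and follows the paper's argument. For $\ell>4np^2$ it reduces to Lemma~\ref{lem:cross} via $2^{-f(\ell,p)}=\lambda^{-\ell/2}$; for $\ell\le 4np^2$ it uses the same disjointness-plus-multiplicity count from the paper (each $\sigma''$ lies in at most $|B|^2$ triples), which you state more cleanly as a sum over $\Omega\times B\times B$, and your count $|B(2n)|=F_{2n+2}=\Theta(\phi^{2n})$ is more accurate than the paper's, differing only by a constant factor absorbed into the suppressed polynomial terms.
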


\begin{proof}
For the case of $\ell > 4np^2$, the proof of Lemma~\ref{lem:cross} carries over word for word. 
We concentrate on the case of $\ell \leq 4np^2$.

It is well known that $B = B(2n)$ is the $2n^{th}$ Fibonacci number, which is less than~$\phi^{2n}$.
Each $\sigma'' \in \Pi\circ\Psi_C(\sigma)$ appears in at most $B^2 \leq \phi^{4n}$ 
elements of $\Psi_C(\sigma)$. 
The upper bound on $H(\sigma'')$ in Lemma~\ref{lem:mapping-bound}, gives $\pi(\sigma'') \geq \pi(\sigma)\lambda^{-(4n-8np+\ell/2)}$.
The lower bound on $|\Psi_C(\sigma)|$ and the disjointness property imply
\begin{align}
1 \geq \pi(\Pi\circ\Psi_C(\cC_C )) & \geq \sum_{\sigma\in\cC_C } \pi(\sigma)  \lambda^{(4n-8np+\ell/2)} 
      \phi^{-4n}  2^{4n h(p)^4}.
\end{align}
The inequality follows by replacing $\sum \pi(\sigma)$ with $\pi(\cC_C )$.
\end{proof}

Theorem~\ref{thm:all-p} depends on the following technical lemma 
regarding the set $\cC_{\ell}$ of configurations 
containing crosses of fixed length $L + \ell$:
$\cC_{\ell} = \union \{\cC_C : |C| = L + \ell \} $. 
The idea of the lemma is to show that $\pi(\cC_{\ell})$ is exponentially small, 
where the constant in the exponent is independent of $\ell$. 
This also means that {the free energy} $\ln\pi(\cC_{\ell})/n$ is less than
some negative constant. 
Since there are polynomially many values of $\ell$, 
it will follow that the whole set $\cC$ is exponentially small.

\begin{lemma}\label{lem:piOmegaL1}
Let $\cC_{\ell}$ be the set of contour configurations containing a cross of length $L + \ell$. 
Then for large enough $\lambda$ 
\begin{equation}\label{eqn:piOmega0L}
- \log_2(\pi(\cC_{\ell}))/n = \Omega(1).
\end{equation}
\end{lemma}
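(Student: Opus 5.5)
We will bound $\pi(\cC_{\ell})$ by a union bound over crosses of length $L+\ell$, applying Lemma~\ref{lem:piOmegaC} to each cross. By the counting step in the proof of Theorem~\ref{thm:peierls}, there are at most $\mathrm{poly}(n)\,3^{L+\ell}=\mathrm{poly}(n)\,3^{8n-8np+\ell}$ crosses of length $L+\ell$. This gives
\begin{equation*}
\pi(\cC_{\ell}) \ \leq\ \mathrm{poly}(n)\, 3^{8n(1-p)+\ell}\,\lambda^{-(4n-8np+\ell/2)}\,2^{-f(\ell,p)}.
\end{equation*}
It then suffices to show that $-\log_2$ of the right-hand side is at least $cn$ for a constant $c>0$ independent of $p\in[0,1/2]$ and $\ell\ge 0$, once $\lambda$ is larger than some absolute constant. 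We take logarithms and split into the two regimes in the definition of $f$.

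For $\ell>4np^2$, $-\log_2$ of the bound is
\begin{equation*}
(4n-8np+\ell)\log_2\lambda-(8n(1-p)+\ell)\log_2 3.
\end{equation*}
It is linear in $\ell$ with positive slope once $\lambda>3$, so we only need the endpoint $\ell=4np^2$. There the value is $n\bigl[4(1-2p+p^2)\log_2\lambda-(8-8p+4p^2)\log_23\bigr]=n\bigl[4(1-p)^2\log_2\lambda-(8-8p+4p^2)\log_23\bigr]$. Since $(1-p)^2\ge 1/4$ on $[0,1/2]$, this is at least $n\bigl[\log_2\lambda-8\log_23\bigr]$, which is $\Omega(n)$ for $\lambda>3^{9}$, say.

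For $\ell\le 4np^2$ the exponent is
\begin{equation*}
(4n-8np+\ell/2)\log_2\lambda+4n\bigl(h(p)^4-\log_2\phi\bigr)-(8n(1-p)+\ell)\log_23.
\end{equation*}
Two sub-cases arise.

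\emph{Case $p\le p_0$.} Here $p_0<1/2$ is a fixed constant. The energy term $(4-8p_0)n\log_2\lambda$ dominates all the other terms, which are $O(n)$ with absolute constants. So the exponent is $\Omega(n)$ once $\lambda$ exceeds a constant depending on $p_0$.

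\emph{Case $p\in[p_0,1/2]$.} This is the main obstacle. The energy gain $4n(1-2p)$ vanishes at $p=1/2$, so the entropy term must carry the argument. It must beat both $4n\log_2\phi$ and the cross-counting entropy $(8n(1-p)+\ell)\log_23\approx 5n\log_23$. As written, $h(p)^4\le 1$ cannot beat these. I would therefore first re-examine the counting of crosses near $p=1/2$. The extra $\ell/2\log_2\lambda$ energy should absorb the $3^{\ell}$ factor, and the minimal-length part of the cross should be counted more carefully than by $3^{L}$: a near-minimal cross consists of near-monotone paths, and its count should be matched against the staircase count of Lemma~\ref{lem_num_staircases}. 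If that fails, I would choose $p_0$ so that the staircase entropy $h(p)^4$ exceeds the cross-path entropy plus $\log_2\phi$. Then I would combine the two regimes, using the fact that $\ell\log_2\lambda$ grows while the counting penalty grows only like $\ell\log_23$, to obtain a uniform constant.

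Finally, since $\ell$ ranges over polynomially many values, the $\mathrm{poly}(n)$ factors are absorbed. This gives $-\log_2\pi(\cC_\ell)/n=\Omega(1)$ uniformly.
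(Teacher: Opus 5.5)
Your treatment of the regime $\ell>4np^2$ and of $p\le p_0$ is correct. The gap is the remaining case: $p$ near $1/2$ with $\ell$ small. That is exactly the case this lemma exists for, since it is where Theorem~\ref{thm:peierls} gives nothing, and there you offer only a plan. So the proposal does not prove the statement. As you observe, with the crude count $3^{L+\ell}$ the exponent at $p=1/2$, $\ell=0$ is $4n(1-\log_2\phi)-4n\log_2 3<0$ for every $\lambda$. (The penalty there is $4n\log_2 3$, not $5n\log_2 3$.) Since $h(p)^4\le 1<\log_2 3$, your fallback of choosing $p_0$ so that the staircase entropy beats the cross-path entropy cannot work with that count either. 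Everything hinges on showing that near-minimal crosses have subexponential entropy, and you have not done that.

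The paper's proof supplies exactly this refined count. Every cross of length $L+\ell$ contains a minimal cross made of two parts: a monotone path from $v_1$ to $v_3$ (with $2n$ vertical and $2n-4pn$ horizontal edges) and a monotone path from $v_2$ to $v_4$. On top of this there are $\ell$ extra edges, each with at most $3$ directions. With $s=1/2-p$ and $r=\ell/n$, the number of crosses is at most
\begin{equation*}
\binom{2n+4sn}{4sn}^{2}\binom{4n+8sn+rn}{rn}\,3^{rn}
\;\le\; \exp_2\Bigl[n\Bigl((4+8s)\,h\bigl(\tfrac{2s}{1+2s}\bigr)+(4+8s+r)\,h\bigl(\tfrac{r}{4+8s+r}\bigr)+r\log_2 3\Bigr)\Bigr].
\end{equation*}
The bracketed exponent tends to $0$ as $s,r\to 0$. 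Against it, Lemma~\ref{lem:piOmegaC} gives a gain of $n[(8s+r/2)\log_2\lambda+4(h(p)^4-\log_2\phi)]$, whose staircase part tends to $4-4\log_2\phi\approx 1.22$.

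You then need to order the quantifiers in two stages, and this is what should replace your vague final step of ``combining the two regimes.''
\begin{enumerate}
\item First fix $s_0,r_0>0$ depending only on these entropy functions, not on $\lambda$. Take $r_0\le 4(1/2-s_0)^2$, so that you stay in the branch $\ell\le 4np^2$ of $f$. Choose them so that the net exponent exceeds a positive constant whenever $s\le s_0$ and $r\le r_0$.
\item When $s>s_0$ or $r>r_0$, the energy term is at least $c_0(1+r)\log_2\lambda$ with $c_0=\min(8s_0,r_0/4,1/4)$. The counting exponent, together with the $-4\log_2\phi$ term, is at most $A+Br$ for absolute constants $A,B$. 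So the energy term wins once $\lambda$ is large.
\end{enumerate}
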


\begin{proof}
Let $s = 1/2 - p$, $r = \ell/n$. Abusing notation, let $f(r, s) = f(\ell, p)/n$.

As $s, r \rightarrow 0$, $f(r, s) \rightarrow 4 - 4\log_2\phi \approx 1.2$.

Each $C \in \cC_{\ell}$ contains a minimal cross consisting of a vertical path connecting $v_1$ to $v_3$ and a horizontal path connecting $v_2$ to $v_4$. 
The vertical path contains $2n$ vertical edges and $2n-4pn$ horizontal edges. 
There are at most $\binom{4n-4pn}{2n-4pn} = \binom{2n+4sn}{4sn}$ choices of  vertical path.  
Likewise there are at most $\binom{2n+4sn}{4sn}$ choices of minimal horizontal path.  
Then there are $\binom{8n-8np+\ell}{\ell} = \binom{4n+8sn+rn}{rn}$ ways to choose the locations of the $\ell$ extra edges, and $3$ possible directions for each extra edge. In sum,

\begin{align*}
\pi(\cC_{\ell}) & \leq | \{C : |C| = L + \ell \}| \max_{|C| = L + \ell} \pi(\cC_C ) \\
      & \leq \binom{4n+8sn+rn}{rn} \binom{2n+4sn}{4sn}^2 3^{rn} 
      \max_{|C| = L + \ell} \pi(\cC_C ) \\
      & \leq \exp_2\left[n(4+8s+r)h\left(\frac{r}{4+8s+r}\right) + n(4+8s)h\left(\frac{2s}{1 + 2s}\right)\right] 3^{rn} \\
      &  \qquad \cdot \max_{|C| = L + \ell} \pi(\cC_C ) \\
      & \leq \exp_2[n(~ (4+8s+r)h(r/(4+8s+r)) + (4+8s)h(2s/(1 + 2s)) + r\log_2 3  \\
      & \qquad\qquad  - \log_2\lambda(8s + r/2) - f(r, s)~)] ,
\end{align*}
where we have applied first Stirling's formula and then Lemma~\ref{lem:piOmegaC} in the last two inequalities.

Taking $\log_2$ of both sides, dividing by $n$, and rearranging terms, we have
\begin{align*}
- \log_2(\pi(\cC_{\ell}))/n & \geq  
                -[~(4+8s+r)h(\frac{r}{4+8s+r}) + (4+8s)h(\frac{2s}{1+2s}) + r \log_2 3 ~] \cr
				     & \qquad + [~ (8s+ r/2) \log_2 \lambda + f(r, s) ~] .
\end{align*}

For small $s$ and $r$ the first term is $o(1)$ and the second term is $\Omega(1)$. Otherwise, for large $\lambda$ the second term dominates.
\end{proof}

We now state the key theorem bounding  the probability of the set $\cC$ of configurations containing crosses.

\begin{theorem}\label{thm:all-p}
Let  $\Omega$ be the set of all spin glass configurations in a $kn\times kn$ square lattice $\Lambda$  centered at $(n, n)$, $k \geq 2$. Suppose that four distinguished vertices $v_1,..,v_4$ lie on the boundary of an axis-aligned $2n \times 2n$ square $S$ centered in $\overline{\Lambda}$, and these four vertices form the corners of a (not necessarily axis-aligned) square (i.e., they are shifted by $2p$ around the boundary of $S$).
Let $\cC$ be the set of configurations in which $v_1$ is connected to both $v_2$ and $v_4$. 
Then for $\lambda$ large enough we have
\begin{equation}\label{eqn:piOmega0}
\pi(\cC) \leq 2^{-\Omega(n)}.
\end{equation}
\end{theorem}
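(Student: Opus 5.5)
The plan is to obtain Theorem~\ref{thm:all-p} from Lemma~\ref{lem:piOmegaL1} by a union bound over the possible cross lengths. Every configuration in $\cC$ has a unique cross $C$, namely the maximal component of $\Gamma(\sigma)$ containing $v_1$. By Lemma~\ref{lem:min-cross}, $|C| \geq L = 8n-8pn$. Writing $|C| = L+\ell$, this gives the disjoint decomposition
\[
\cC = \bigcup_{\ell \geq 0} \cC_{\ell}.
\]
The cross is a subgraph of $\overline{\Lambda}$, which has $O(k^2n^2)$ edges. So $\ell$ ranges over at most $O(n^2)$ values (with $k$ fixed), and
\[
\pi(\cC) \leq \sum_{\ell} \pi(\cC_\ell) \leq O(n^2)\,\max_\ell \pi(\cC_\ell).
\]

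The key point is that the bound from Lemma~\ref{lem:piOmegaL1} must be uniform. There must be a single constant $c>0$ and a single threshold $\lambda_0$ such that $\pi(\cC_\ell) \leq 2^{-cn}$ for every $\ell$, every $0\le p\le 1/2$, and every $\lambda > \lambda_0$. Once that holds, $\pi(\cC) \leq \mathrm{poly}(n)\,2^{-cn} = 2^{-\Omega(n)}$, since the polynomial factors suppressed earlier (Stirling's formula, the count in Lemma~\ref{lem_num_staircases}, and the number of values of $\ell$) are absorbed by reducing $c$ slightly.

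To check uniformity, I would revisit the final inequality in the proof of Lemma~\ref{lem:piOmegaL1}, written in the variables $s=1/2-p\in[0,1/2]$ and $r=\ell/n \in [0, O(k^2 n)]$. I would split into two regimes.

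\textbf{Regime 1: $r \leq 4p^2$.} Here $f = 4(h(p)^4-\log_2\phi)$, which is bounded below uniformly. Near $s=0$ it is about $4-4\log_2\phi>0$. For larger $s$, the term $8s\log_2\lambda$ takes over. The entropy term is continuous in $(r,s)$ and tends to $0$ as $(r,s)\to 0$. So I would pick $\delta>0$ such that the entropy term is at most half of $4-4\log_2\phi$ whenever $r,s<\delta$, and such that $f$ stays bounded away from $0$ there as well. On the complementary compact region, $8s+r/2 \geq \delta/2$, so choosing $\lambda$ large makes $(8s+r/2)\log_2\lambda$ exceed the bounded entropy term by at least $1$.

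\textbf{Regime 2: $r > 4p^2$, including large $r$.} Here the entropy term is at most $C(1+r)$ for an absolute constant $C$, because $(4+8s+r)h(\cdot)\le 4+8s+r$. The energy term is at least
\[
(8s + r/2)\log_2\lambda + (r/2)\log_2\lambda.
\]
This regime needs extra care when $s$ is small, since then $r>4p^2$ forces $r > 1-O(s)$, so $r$ is bounded below and the linear-in-$r$ energy gain of order $r\log_2\lambda$ dominates $C(1+r)$ once $\lambda$ is large. When $s$ is not small, $8s\log_2\lambda$ handles the constant part.

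I expect the main obstacle to be exactly this uniformity near the corner $(s,r)\to(0,0)$. The energy advantage vanishes there, and the gain must come from the entropy constant $4-4\log_2\phi$, so I must make sure that $h(p)^4$ really tends to $1$ as $p\to 1/2$ and that the threshold $\delta$ can be chosen independently of $\lambda$. After that, $\lambda_0$ is chosen depending only on $\delta$ and $C$. With $\lambda_0$ fixed this way, the union bound completes the proof.
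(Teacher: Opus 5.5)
Your proposal is correct and matches the paper's proof: a union bound over the $O(n^2)$ possible values of $\ell$, combined with the uniform exponential bound of Lemma~\ref{lem:piOmegaL1}. Your two-regime check makes explicit the uniformity in $\ell$ and $p$ that the paper only asserts inside the proof of that lemma. Near $(s,r)=(0,0)$ the gain comes from the constant $4-4\log_2\phi$; elsewhere it comes from $(8s+r/2)\log_2\lambda$, or from $(8s+r)\log_2\lambda$ when $r>4p^2$.
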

\begin{proof}
Since $\ell$ takes $O(n^2)$ values,
$\pi(\cC) \leq O(n^2) \max_{\ell} \pi(\cC_{\ell}) \leq 2^{-\Omega(n)}.$
\end{proof}

\begin{proof}[Proof of Theorem~\ref{thm:main}]
Set  $T = 2/\log_2\lambda$ for large enough $\lambda$. Let $t<T$. 
The state space $\Omega$ contains the two disjoint subsets $\Omega_-$ and $\Omega_+$, separated by a cut set $\cC^*$ consisting of all configurations within two steps of $\cC$. 
We have $\pi(\cC^*) < \pi(\cC) \mathrm{poly}(n)$ and by symmetry $\pi(\Omega_-) = \pi(\Omega_+)$. 
The conductance $\Phi$ satisfies 
\begin{equation}
\Phi \ \leq \ 
      \frac{\sum_{\sigma \in \Omega_-, \sigma' \in \Omega_{\cC}} \pi(\sigma) \Pr(\sigma, \sigma')} 
     {\pi(\Omega_-)}
    \  \leq \ 4\cdot \pi(\cC^*)
    \ \leq \ 2^{-\Omega(n)},~~ {\mathrm{for~large~enough}}~ n. 
\end{equation}
Therefore the Markov chain mixes slowly. 
\end{proof}

\subsection{Construction of the Map}\label{map_intro}

In this section we will construct the map $\Psi_C$
using pairs of paths as shown in Figure~\ref{fig:min-cross-b}. 

Given upper and lower staircases, we will map $\sigma \in \cC_C $ to some $\sigma' \in \Omega$,  marking certain edges with tokens. 
We will show that one can reconstruct $\sigma$ from $C$, $\sigma'$, and the marked edges, no two marked edges are adjacent, and  
$H(\sigma') \leq H(\sigma) - |C| + 4n$, implying Lemma~\ref{lem:mapping-bound}.

We briefly outline our approach here. The map is motivated by the map $\psi_C$ in the proof of Lemma~\ref{lem:cross}, but the map must encode the locations of the staircase edges in $\sigma'$ without increasing $H(\sigma')$. 
The first idea for the map is to remove $C$ from $\sigma$ then add the edges of the staircases. 
The result is nearly a configuration in $\Omega$, except that some edges of $\sigma$ are doubled  by edges of the staircases. 
The first step in removing double edges is to shift components of $\sigma$ lying between one of the staircases and $C$, away from the staircase toward the removed edges of $C$. 
See Figure~\ref{fig:staircase-cross}. 
In case a segment of the staircase bisects a component of $\sigma$ we must shift that segment to preserve the even degrees of vertices.
The second step is to remove both copies of any double edges.
If we do this with every double edge it may happen that $\sigma$ cannot be reconstructed;
so, at the corners of the staircase we divert the staircase to avoid $\sigma$ and place a token to mark the diverted edge. 
See Figure~\ref{fig:staircase}.

Let $S_U$ be an upper staircase and $S_L$ be a lower staircase. 
The simple regions in the interior of $C \cup S_U \cup S_L$ may be two-colored gray and white, with the exterior colored gray.  

Regions separated by a path in $C \cap S_U$ or $C \cap S_L$ will have the same color. (We can think of doubled paths as regions of zero area.)
This situation is shown in Figure~\ref{fig:staircase-cross}. 

\begin{figure}[t]
\centering
\includegraphics[scale=0.4]{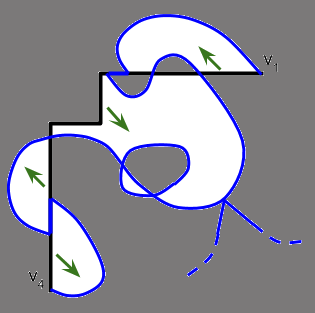} 
\caption{A staircase in black, with the part of a cross containing a path from $v_1$ to $v_4$ in blue. The green arrows show the direction edges of $\sigma$ are shifted in each white region.}
\label{fig:staircase-cross}
\end{figure}

Recall that $S_{\ell}$ consists of the interior and boundary edges of the $(2n-4np+\ell)\times(2n-4np+\ell)$ axis-aligned square centered at the center of $\Lambda$, as shown in Figure~\ref{fig:min-cross-b}.
We have the following lemma.

\begin{lemma}
\label{lem:staircase-cross}
If  $(S_U \cup S_L) \cap S_{\ell} = \emptyset$, then no simple region in the interior of $C \cup S_U \cup S_L$ has a boundary containing edges of both $S_U$ and $S_L$. 
\end{lemma}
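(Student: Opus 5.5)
We argue by contradiction and use the geometry of $C$ against the forbidden square $S_{\ell}$. Suppose some simple region $R$ in the interior of $C\cup S_U\cup S_L$ has boundary meeting both $S_U$ and $S_L$. The upper staircase $S_U$ lies entirely in the upper-left corner rectangle cut off by $v_1,v_4$, outside $S_{\ell}$; likewise $S_L$ lies in the lower-right corner rectangle. The first step is to record that $S_U$ together with the boundary segments of $S$ from $v_1$ through the corner $(0,2n)$ to $v_4$ encloses an upper-left region $A_U$, and similarly $S_L$ determines a lower-right region $A_L$. Both regions avoid the interior of $S_{\ell}$, and they are separated from each other by $S_{\ell}$ together with the two opposite corner rectangles (upper-right and lower-left).

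The second step is to show that $\partial R$ then contains a curve inside $R$ that crosses from $A_U$ to $A_L$ without meeting $C$ except at the shared boundary. Pick points $x\in S_U\cap\partial R$ and $y\in S_L\cap\partial R$ and join them by an arc $\gamma$ through the interior of $R$. Then $\gamma$ meets no edge of $C$. Any such arc must pass either through $S_{\ell}$ or through the upper-right or lower-left corner zone of $S$; if it leaves $S$ altogether, it must go around a terminal $v_2$ or $v_4$.

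The third step, which is the heart of the argument, shows that $C$ blocks every such crossing. By Lemma~\ref{lem:min-cross} and its proof, $C$ contains a path $P_{13}$ from $v_1$ to $v_3$ and a disjoint path $P_{24}$ from $v_2$ to $v_4$. Since $|C|=L+\ell$ while each path needs at least $4n-4pn$ edges, each path has at most $\ell$ excess edges, so each deviates from monotone by at most $\ell/2$ in each direction. Consequently $P_{13}$ and $P_{24}$ cross inside the square of side $2n-4np+\ell$ around the center, which is $S_{\ell}$. Together $P_{13}\cup P_{24}$ separates $S$ into four sectors, with $v_1,v_2,v_3,v_4$ in cyclic order, and $A_U$ and $A_L$ lie in opposite sectors: one bounded by $v_4,v_1$ and the other by $v_2,v_3$. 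An arc from one opposite sector to the other must cross $P_{13}$ or $P_{24}$; this includes arcs that go outside $S$, since the paths end on $\partial S$ and the staircases are attached at the same endpoints. So $\gamma$ meets $C$, a contradiction.

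The main obstacle I expect is making rigorous the claim that the excess-length bound confines the crossing to $S_{\ell}$, and the precise meaning of \emph{opposite sectors} when $C$ has extra loops and the staircases touch $C$ or share edges with it. I would handle this with a Jordan-curve argument: form the closed curve $S_U\cup P_{13}[v_1,z]\cup P_{24}[z,v_4]$, where $z$ is the crossing point, and check that $S_L$ lies outside it, using only that no point of $S_L$ is within the region swept by the monotonicity deficit $\ell/2$. If the metric bound does not close cleanly, I would fall back on a parity count of how many times a generic arc crosses $P_{13}\cup P_{24}$.
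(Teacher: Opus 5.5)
Your argument has a genuine gap. It never uses that $R$ is an \emph{interior} (bounded) region, and without that hypothesis the statement is false. For a minimal cross, the unbounded face of $C\cup S_U\cup S_L$ (the gray exterior in the paper) borders the northwest side of $S_U$ and the southeast side of $S_L$. These two sides are joined outside $S$ by an arc passing around $v_1$ and $v_2$. Since $P_{13}$ and $P_{24}$ terminate at the $v_i$ on $\partial S$, such an arc meets neither path, so your claim that arcs leaving $S$ must cross $P_{13}\cup P_{24}$ is wrong.

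The ``opposite sectors'' claim also fails inside $S$. For $\ell\ge 2$, $P_{13}$ can step one unit west from $v_1$ and run south along $x=2pn-1$. Meanwhile $S_U$ can run south from $v_1$ along $x=2pn$, above $S_{\ell}$, and then turn west. Then $P_{13}$ crosses $S_U$, and the initial piece of $S_U$, together with a strip of $A_U$, lies in the sector between $v_1$ and $v_2$. Symmetrically, $S_L$ can protrude into that same sector. Whether some face there touches both staircases and is bounded is then decided by the remaining edges of $C\setminus(P_{13}\cup P_{24})$, for instance a path of $C$ running outside $S$ from near $v_1$ to near $v_2$. Your argument never examines those edges. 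The Jordan-curve fallback does not fix this: at best it shows that faces inside $S_U\cup P_{13}[v_1,z]\cup P_{24}[z,v_4]$ miss $S_L$, and it says nothing about faces on the other side of $S_U$.

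What is missing is a length count on $\partial R$ itself, which is how the paper proceeds. Because $R$ is a bounded simple region whose boundary meets both staircases, and $S_U\cap S_L=\emptyset$, its boundary cycle contains two disjoint arcs $r_1,r_2\subseteq C$, each joining $S_U$ to $S_L$. Since the staircases avoid $S_{\ell}$, each $r_i$ has length at least the side $2n-4np+\ell$ of $S_{\ell}$ (in fact strictly more, which is what handles $\ell=0$). Interior vertices of $r_i$ have degree $2$ in $C$. Hence each $r_i$ is either contained in or disjoint from the paths $p_1$ (from $v_2$ to $v_3$) and $p_2$ (from $v_1$ to $v_4$) in $C$, each of length at least $2n$. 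A short case analysis then gives $|C|\ge 8n-8np+2\ell$, contradicting $|C|=L+\ell$.

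In your proposal, the budget $|C|=L+\ell$ is used only to locate the crossing point $z$ inside $S_{\ell}$. It must instead be charged against the $C$-arcs of $\partial R$, because it is precisely the extra edges of $C$ that could otherwise close off a face touching both staircases.
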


\begin{proof}
Suppose there is such a simple region. We will show in this case that $C$ contains at least $8n-8np+2\ell = L + 2\ell$ edges, a contradiction. $C$ contains two disjoint paths $r_1$ and $r_2$ that are each part of the boundary of $R$ and connect $S_U$ to $S_L$, as well as two disjoint paths $p_1$ from $v_2$ to $v_3$, and $p_2$ from $v_1$ to $v_4$. $r_1$ and $r_2$ each have length at least one side length of $S_{\ell}$, which is $2n - 4np + \ell$. Since $r_1, r_2 \in C$ and $R$ is a simple region, the interior vertices of $r_1$ and $r_2$ have degree two. Therefore, each $r_i$ is contained in one of the $p_j$ or disjoint from both. In the disjoint case, $p_1$ and $p_2$ each contain at least $2n$ edges, and $C$ contains at least $8n-8np+2\ell$ edges. 
In the other case, say, $p_1$ contains $r_1$. In the best case $p_1$ also contains $r_2$, and traversing $p_1$ from $v_2$ to $v_3$ there are no north edges. Then there are $2n-2np$ south edges and $2(2n-4np+\ell) + 2np = 4n-6np+2\ell$ east-west edges, a total of $6n-8np+2\ell$ edges. Adding the $2n$ edges from $p_2$, $C$ contains at least $8n-8np+2\ell$ edges.
\end{proof}

Let $R$ be a white simple region in the interior of $C \cup S_U \cup S_L$ whose boundary contains edges of $S_U$. 
By Lemma~\ref{lem:staircase-cross} the boundary of $R$ does not also contain edges of $S_L$.
Considering the edges of $S_U$ to be directed from $v_1$ to $v_4$,
$R$ is to the right or left of $S_U$.

We may assign a $+$ or $-$ to each site in $\Lambda$ so that the sites in $R$ adjacent to $C$ are $+$ and the edges of $\sigma$ are exactly those edges between two neighboring sites of opposite sign.
We define a {\em patch} to be a connected set of $-$ sites. 
The outer boundary of a patch is the unique cycle of edges in the
configuration that, when traversed counterclockwise, has sites inside to the
left of each edge and sites outside to the right.

We define an {\em interior edge} of $S_U$ to be one that bounds two ``$-$'' sites and a {\em boundary edge} of $S_U$ to be one that is on the boundary of a patch.

\begin{figure}[t]
\begin{center}
\includegraphics[scale=.25]{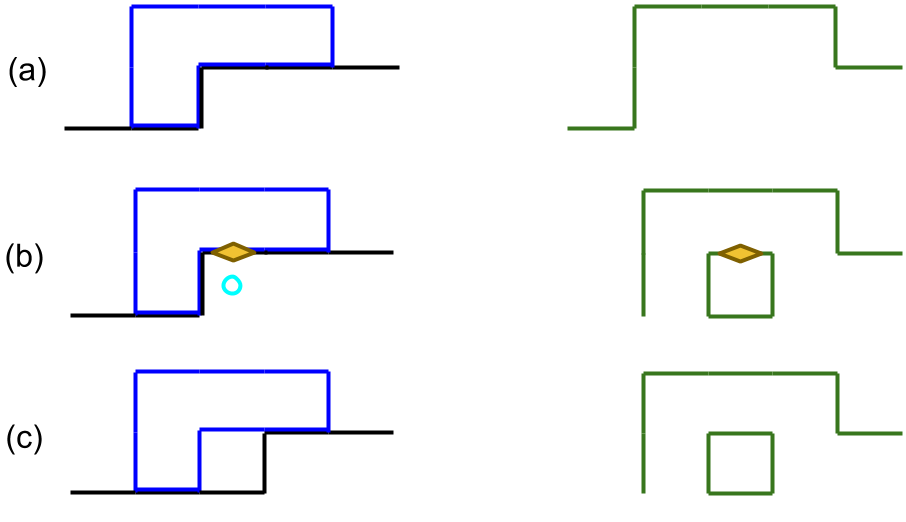} 
\end{center}
\caption{(a) A staircase and patch that share four edges (left), and an encoding that loses information (right). 
(b) A staircase and patch that share four edges (left), and an encoding with a token that preserves information (right).
(c) A staircase and patch that share two edges (left), and an encoding that preserves information (right).
}
\label{fig:staircase}
\end{figure}

We consider a white region $R$ to the left of $S_U$ and describe the map inside $R$ and along its boundary. 
The map acts symmetrically on regions to the right of $S_U$ and regions bounded by $S_L$.
Edges of $C$ are removed from the boundary of $R$.
Edges of $\sigma$ contained in $R \cup S_U$ are shifted away from $S_U$ as shown in Figure~\ref{fig:staircase-cross}: since $R$ is to the left of $S_U$ the edges are shifted one step south and one step east.

Next the edges of $S_U$ on the boundary of $R$ are added. 
Any interior added edges are shifted in the same direction as the edges of $\sigma$; this restores all the lattice points in $R \cup S_U$ to even degree as shown in Figure~\ref{fig:patch_edge_labels}.
 
Next each west internal edge of $S_U$ that precedes a south edge is marked with a token.
The sign is flipped at each site to the southeast of the corner formed by a south internal edge of $S_U$ followed by an unmarked west edge.

The drawing on the left of Figure~\ref{fig:patch_edge_labels}(a) 
shows the staircase in black and the patch
in blue before the shifting step. 
Shifting the edges of $\sigma$ and internal edges of $S_U$ and placing the token produce the drawing in the center.
Flipping the site and removing the internal edges produces the drawing on the right.

\begin{figure}[t]
\centering
\includegraphics[scale=.4]{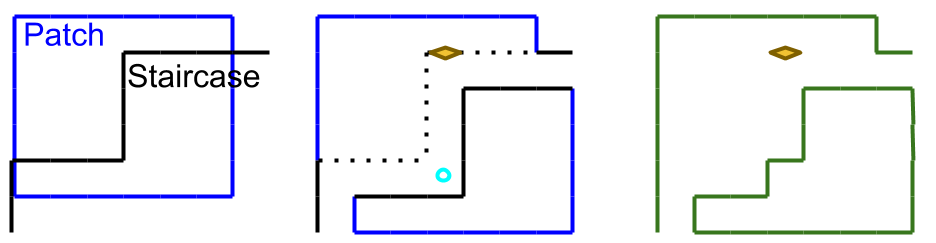}
\caption{Left: An interior section of $S_U$. Right: the map}
\label{fig:patch_edge_labels}
\end{figure}

The addition of edges of $S_U$ may have created double edges.
If the map were simply to delete double edges then the map would fail to encode the staircase edges, as shown in Figure~\ref{fig:staircase}(a). 
To avoid this problem the map operates as follows.
Consider each west double edge that is followed by some number of south double edges. 
The west edge of $S_U$ is marked with a token, shown in gold in Figure~\ref{fig:staircase}(b). 
Next the sign is flipped at the site to the east of each south double edge, marked with a blue circle in Figure~\ref{fig:staircase}(b). 
Next double edges are removed, producing the drawing on the right. 
We note that the mapping contains no more energy than the original and no two consecutive edges of $S_U$ are marked with tokens.
Figure~\ref{fig:staircase}(c) shows that the token is necessary. 
The drawing on the left maps to the figure on the right by the removal of the two double edges.

Restricted to $R$ the mapping contains no more energy than the original. 
When a site is flipped, if the south and west edges of $S_U$ are both interior edges, then two edges are removed and two are added, creating no excess.
The only worrisome case occurs when a south interior edge of $S_U$ is followed by a west non-interior edge. 
In this case when the site to the southeast is flipped, one edge is removed and three are added, creating two excess edges.
However the west edge of $S_U$ creates a double edge with the above-patch. 
The double edge is removed, canceling the two excess edges. 
See Figure~\ref{fig:mapping_steps}.

An edge of $C \cap S_U$ may be on the boundary of two white regions and may be added twice by the map. The same is true for $C \cap S_L$. This situation appears toward the left side of Figure~\ref{fig:staircase-cross}. The lemma below will show that no more than $\ell/2$ edges are added twice. The mapping removes $L + \ell = 8n-8np + \ell$ edges and adds at most $4n + \ell/2$ edges, for a net loss of at least $4n-8np+\ell/2$ edges. This will complete the construction, proving Lemma~\ref{lem:mapping-bound}.

Before stating the lemma we make some observations. $C$ contains a cross $C'$ that consists of two edge-disjoint paths from $v_1$ to $v_2$ and from $v_3$ to $v_4$ that intersect in at least one vertex. Recall that $\overline{\Lambda}$ is centered at $(n, n)$. It contains a {\em central region} of vertices
$\{(x, y): x \in [2np, 2n-2np] \mathrm{~or~} y \in [2np, 2n-2np]\}$. $C'$ is shown in blue and the central region is shown in green in Figure~\ref{fig:marked-edges-a}. Note that a horizontal line through the central region crosses an even number of vertical edges in $C'$, and a horizontal line not contained in the central region crosses an odd number of vertical edges in $C'$. A similar observation holds for vertical lines and horizontal edges of $C'$. 

We may mark $L/2$ horizontal edges and $L/2$ vertical edges of $C'$, such that a horizontal line through the central region crosses exactly two marked vertical edges, and a horizontal line not contained in the central region crosses exactly one marked vertical edge; and similarly for vertical lines and marked horizontal edges. Figure~\ref{fig:marked-edges-b} shows marked edges in pink. (The marking of edges is not unique.) Note that $C$ contains $\ell$ unmarked edges.
 
\begin{figure}[t]
\centering
 \begin{subfigure}{0.4 \linewidth}
    \includegraphics[scale=.25]{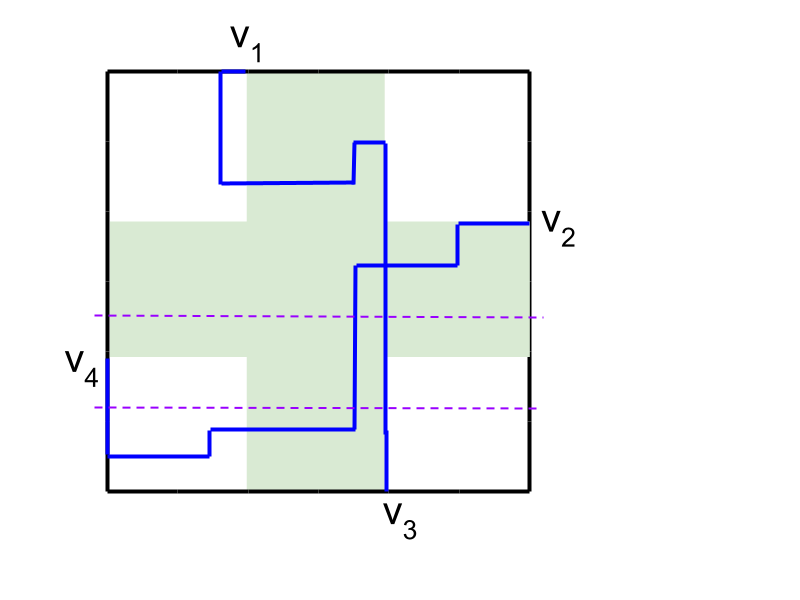}
    \subcaption{\centering}
    \label{fig:marked-edges-a}
  \end{subfigure}
  \begin{subfigure}{0.4 \linewidth}
    \includegraphics[scale=.25]{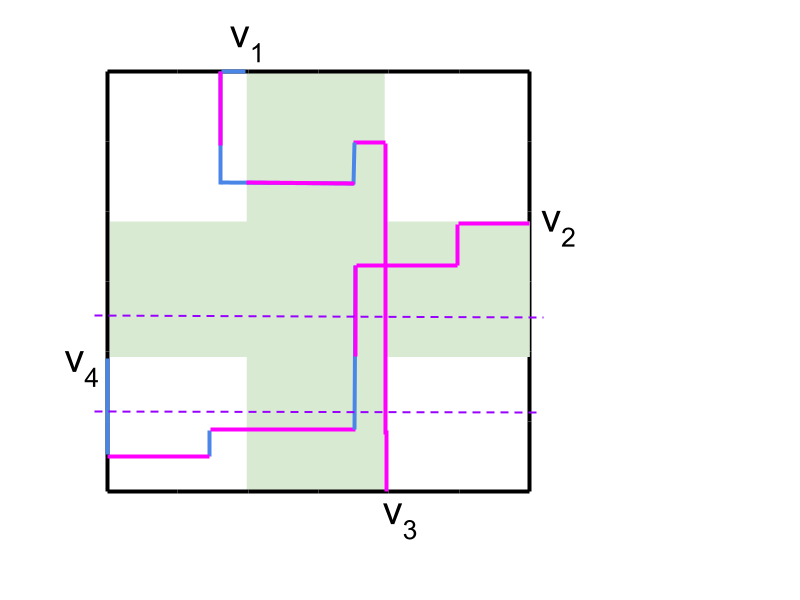}
    \subcaption{\centering}
    \label{fig:marked-edges-b}
  \end{subfigure}
\caption{Left: Two paths and central region. Right: the marked edges in pink. The upper dotted line crosses the paths twice; the lower dotted line three times.}
\label{fig:marked_edges}
\end{figure}

We can now state and prove the following lemma.

\begin{lemma}
\label{lem:white-white-edges}
Let $E$ be the set of edges in $(C \cap S_U) \cup (C \cap S_L)$ on the boundary of two white regions. 
Then $|E| \leq \ell/2$.
\end{lemma}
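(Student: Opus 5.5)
The plan is a charging argument. The count of edges in $E$ will be bounded using the marking of $C'$ described just before the statement. We will show that every edge of $E$ is unmarked, and then that the unmarked edges of $C$ lying in $E$ can be matched with distinct further unmarked edges of $C$ not in $E$. Since $C$ has exactly $\ell$ unmarked edges, this gives $2|E|\le \ell$.

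\textbf{Edges of $E$ are unmarked.} Let $e\in E$ be, say, a vertical edge of $C\cap S_U$ with white regions $R_1,R_2$ on both sides. The two-coloring alternates across every edge of $C\cup S_U\cup S_L$ that is not doubled, so a doubled edge $e$ with the same color on both sides is possible only if the horizontal line through the midpoint of $e$ meets $C\cup S_U$ in a parity-compatible way. Follow that horizontal line from the exterior (gray) to $e$. It crosses $S_U$ at most once, because $S_U$ is monotone, and it crosses $S_L$ at most once. By Lemma~\ref{lem:staircase-cross}, $S_L$ does not border $R_1$ or $R_2$, so the $S_L$ crossing is not involved locally. The colors of $R_1,R_2$ are then determined by the parity of the vertical edges of $C$ that the line crosses.

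By the parity observation, a horizontal line outside the central region crosses an odd number of vertical edges of $C'$, and one inside crosses an even number. The staircases avoid $S_\ell$, and the cross is nearly minimal. I would use these facts to show that both sides of $e$ being white forces the line to cross an extra pair of vertical edges of $C$ beyond those of the marking: one of them is $e$ itself, and a second lies on the same line. Both are then unmarked, since the marked edges on each line are exactly one (outside the central region) or two (inside). This pairs each edge of $E$ with another unmarked edge on the same row or column.

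\textbf{Injectivity of the pairing.} I expect this step to be the main obstacle: the partner edges must be distinct from one another and must not themselves lie in $E$. I would try choosing the partner of $e$ as the nearest unmarked vertical edge of $C$ on the same horizontal line, on the side of $e$ away from the staircase. A line meets $S_U$ only once, so at most one edge of $E$ lies on each horizontal line among the vertical edges of $C\cap S_U$, and similarly for $S_L$ and for vertical lines with horizontal edges. Partners are then distinct provided the $S_U$ and $S_L$ contributions on a single line are not assigned the same partner. Lemma~\ref{lem:staircase-cross}, which keeps $S_U$-bordered and $S_L$-bordered regions separate, should separate them: the partners lie on opposite sides of the center, because the staircases lie in opposite corners outside $S_\ell$.

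Summing over all lines gives $2|E|\le$ the number of unmarked edges of $C$, which is $\ell$, so $|E|\le \ell/2$.
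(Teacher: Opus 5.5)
\textbf{Gap: edges of $E$ need not be unmarked.} Your overall strategy is to charge each edge of $E$ to unmarked edges of $C$ on the same horizontal or vertical line, using the $1$-or-$2$ marking. That is the right idea. However, the intermediate claim that every edge of $E$ is unmarked is false, and your argument for it is a non sequitur. Knowing that the line through $e$ crosses at least two more vertical edges of $C$ than it carries marks tells you that at least two edges on that line are unmarked. It says nothing about \emph{which} ones.

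Here is a concrete counterexample. Take $p=1/4$ and let $C'$ be a minimal cross, so $|C'|=L$ and every edge of $C'$ is marked. Choose $C'$ so that its $v_1$-branch runs straight south from $v_1$ along $S_U$, above height $2n-2np+\ell/2$. Let $C=C'\cup Z$, where $Z$ is the boundary of the $2\times 1$ rectangle having an edge $e$ of this shared segment as its vertical midline; then $\ell=6$.
\begin{itemize}
\item Both unit squares beside $e$ are white, since each is entered from a gray region across a single edge of $Z$. Hence $e\in E$.
\item Yet $e\in C'$ is marked: it is the only edge of $C'$ on its horizontal line, and that line lies outside the central region.
\end{itemize}
Your injectivity step relies on $e$ itself being one of the two unmarked edges (partners chosen ``not in $E$'', etc.), so it collapses as well. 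In addition, the rule ``nearest unmarked edge on the side of $e$ away from the staircase'' is not well defined, because $e$ lies on the staircase.

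\textbf{Repair: count per line, not per edge.} Drop the claim that $e$ is unmarked.
\begin{itemize}
\item Take a vertical $e\in E\cap S_U$. By monotonicity, the horizontal line through $e$ meets $S_U$ only at $e$. By Lemma~\ref{lem:staircase-cross}, neither white region at $e$ borders $S_L$. So the line leaves these regions through two further vertical edges of $C$, giving at least three edges on the line.
\item Outside the central region at most one of these is marked. Inside it, parity forces a fourth edge against two marks.
\item A second edge of $E$ on the same line must come from $S_L$, which is only possible in the central region. It contributes three more distinct edges, giving six edges with two marks.
\end{itemize}
So every line carries at least twice as many unmarked edges of $C$ as edges of $E$. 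Each vertical (horizontal) edge lies on exactly one horizontal (vertical) line, so summing gives $2|E|\le \ell$. No individual partners need to be chosen. This is exactly the paper's argument.
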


\begin{proof}
We will map each edge in $E$ uniquely to two unmarked edges in $C$. Referring to Figure~\ref{fig:staircase-cross}, consider an edge $e \in E$. We may assume $e$ is vertical, and it borders two white regions. A horizontal line through $e$ passes through two other edges of $C$ on the boundaries of those regions. If one of these three edges is marked, then two are unmarked. If two are marked then the horizontal line must be in the central region, and it must cross a fourth edge. Again, there are two unmarked edges. 
If the horizontal line passes through another edge in $E$, from the other staircase, it must be in the central region. It must pass through two more boundary edges for a total of six edges. At least four of these edges are unmarked. In each case, an edge in $E$ can be mapped uniquely to two unmarked edges. This proves the lemma and completes the proof of Lemma~\ref{lem:mapping-bound}.
\end{proof}

\begin{figure}[t]
\centering       
\includegraphics[scale=.3]{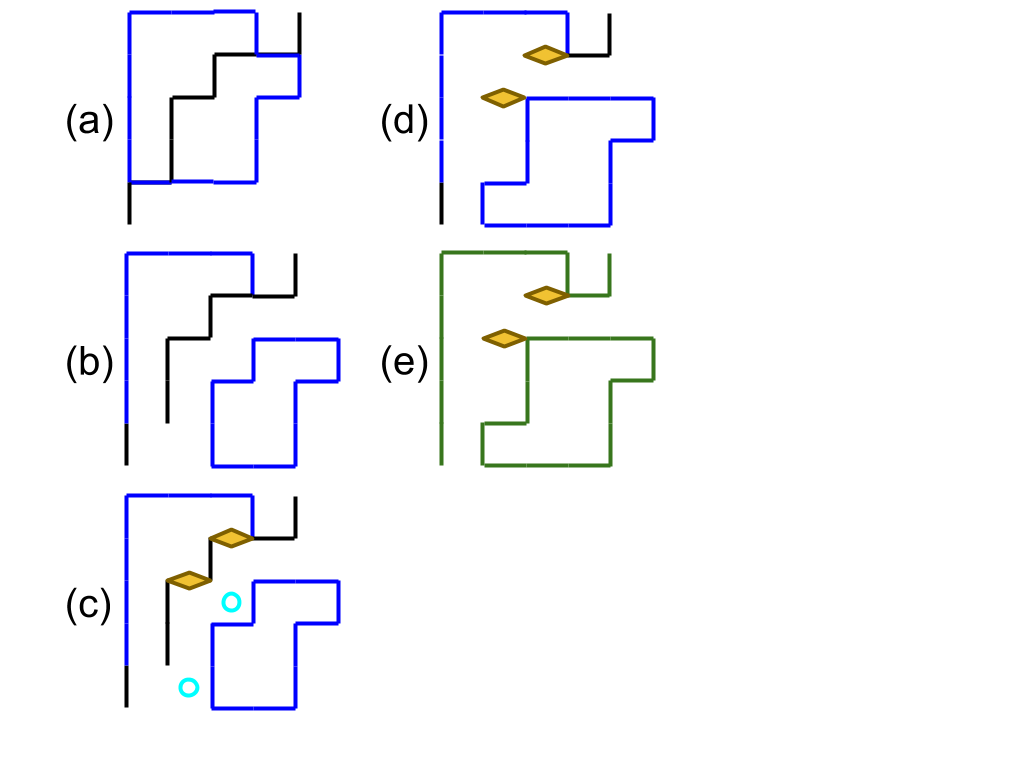}
\caption{The map: blue edges are the patch boundary, black edges are the staircase, and green edges are the final mapping.}
\label{fig:mapping_steps}
\end{figure}

We now give the algorithm for constructing the configuration $\sigma'$, one white region at a time, from the configuration $\sigma$ and two staircases $S_U$ and $S_L$. Following this we give the algorithm for reconstructing the configuration $\sigma$ from $\sigma'$ and the tokens.
Figure~\ref{fig:reconstruction_steps} shows how to reconstruct the patch and interior section of $S_U$ from Figure~\ref{fig:mapping_steps}.

\subsection*{Map Steps:}\label{appendix_mapping_steps}

Given $\sigma\in\cC_C $ pick an upper staircase $S_U$ and a lower staircase $S_L$.
Let $S$ be a subpath of $S_U$ that bounds a white region to the left.
\begin{enumerate}
\item
  Add $S$. If this doubles an edge, label one copy {\em on the staircase}
  and the other {\em above} ({\em below}) the staircase if it is on
  the boundary of an A-patch (B-patch).
\item
    Double each interior edge of $S$. Label one copy {\em on the staircase},
    and the other copy {\em below the staircase}. Shift every edge below the staircase 
    one step south and east and remove any double edges. 
    (Figure~\ref{fig:mapping_steps} step (b).)
\item 
    Place a token on each west interior edge of $S$ that is 
    followed by a south interior edge. 
    (Gold diamonds in Figure~\ref{fig:mapping_steps} step (c)).
\item 
    When a south interior edge of $S$ is followed by a west edge, 
    mark the site southeast of the corner formed by the two edges. 
    The west edge may be an interior edge or an exterior edge that was removed in Step 2.
    (Blue circles in Figure~\ref{fig:mapping_steps} step (c)).
\item
    Flip each marked site and erase the interior edges of $S$.
    (Figure~\ref{fig:mapping_steps} step (d)).
\end{enumerate}
For $S_L$, rotate the configuration $180\degree$, repeat steps 1-5, and rotate back.

\begin{figure}[t]
\centering
\includegraphics[scale=.3]{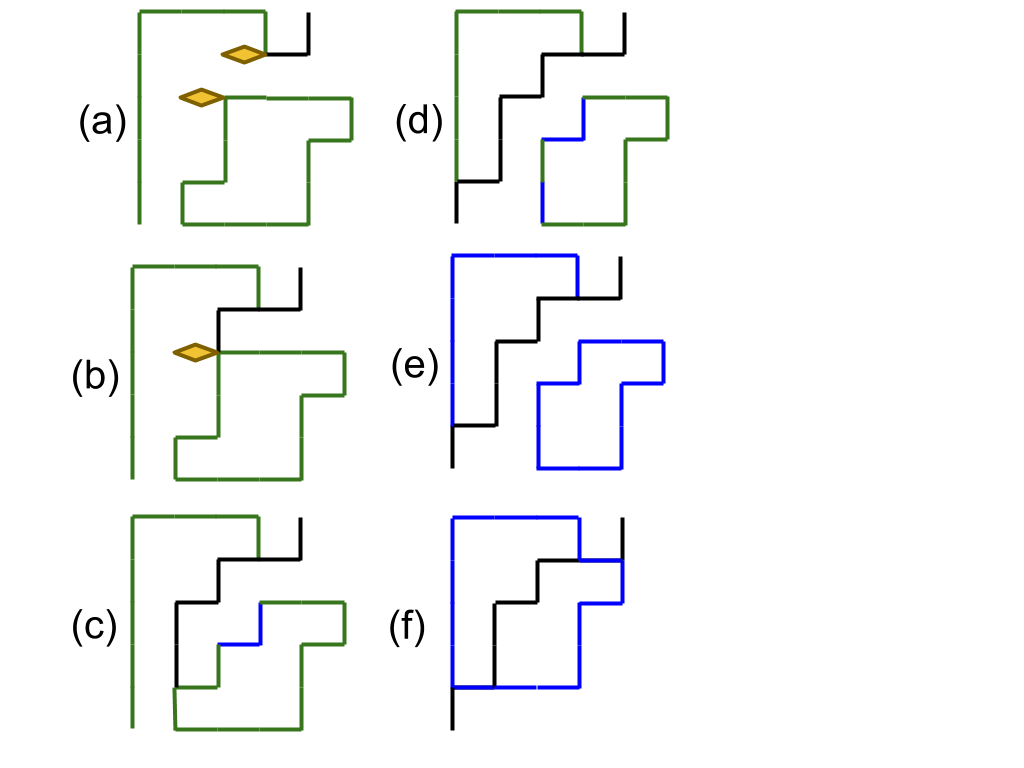}
\caption{Reconstruction steps: blue edges are the patch boundary, green edges are the mapping, and black edges are the staircase.}
\label{fig:reconstruction_steps}
\end{figure}

\subsection*{Reconstruction steps:}\label{appendix_reconstruction_steps}

Given $\sigma'$, the following steps reconstruct $\sigma$.
For a subpath $S$ of $S_U$ that bounds a white region to the left,
these steps inductively reconstruct the edges of $S$ from start to end. 
\begin{enumerate}
\item If the current edge has a token, it is a west edge and the next edge is south. 
(Figure~\ref{fig:reconstruction_steps} steps (b), (c).)
\item If the current edge is a south edge incident to two edges of $\sigma'$ that form a northwest corner, 
  \begin{enumerate}
  \item flip the site bounded by the two edges of $\sigma'$;
  \item the next edge is west. 
  \end{enumerate}
(Figure~\ref{fig:reconstruction_steps} steps (c), (d).)
\item Otherwise the next edge is in the same direction as the current edge.
(Figure~\ref{fig:reconstruction_steps} step (c).)
\item Shift every edge in the white region to the left one step north and west.
(Figure~\ref{fig:reconstruction_steps} step (e).)
\end{enumerate}
For subpaths of the upper staircase that bound a white region to the right, reflect $\sigma$ across the line $y=x$, apply steps 1-3, and reflect back. For the lower staircase, rotate the configuration 180 degrees, repeat the process, and rotate back.

\end{document}